\providecommand{\algorithmname}{Algorithm}
\theoremstyle{remark}
\newtheorem{theorem}{Theorem}
\newtheorem{corollary}{Corollary}
\newtheorem{lemma}{Lemma}
\newtheorem{definition}{Definition}
\newtheorem{remark}{Remark}
\theoremstyle{remark}
\newtheorem{example}{Example}
\title{A New Upperbound on the Broadcast Rate of Index Coding Problems with Symmetric Neighboring Interference}
\begin{document}

\author{Mahesh~Babu~Vaddi~and~B.~Sundar~Rajan\\ 
 Department of Electrical Communication Engineering, Indian Institute of Science, Bengaluru 560012, KA, India \\ E-mail:~\{mahesh,~bsrajan\}@ece.iisc.ernet.in }
 
\maketitle
\begin{abstract}
A single unicast index coding problem (SUICP) with symmetric neighboring interference (SNI) has equal number of $K$ messages and $K$ receivers, the $k$th receiver $R_{k}$ wanting the $k$th message $x_{k}$ and having the side-information $\mathcal{K}_{k}=(\mathcal{I}_{k} \cup x_{k})^c,$ where ${I}_k= \{x_{k-U},\dots,x_{k-2},x_{k-1}\}\cup\{x_{k+1}, x_{k+2},\dots,x_{k+D}\}$ is the interference with $D$ messages after and $U$ messages before its desired message. The single unicast index coding problem with symmetric neighboring interference (SUICP-SNI) is motivated by topological interference management problems in wireless communication networks. Maleki, Cadambe and Jafar obtained the capacity of this SUICP-SNI with $K$ tending to infinity and Blasiak, Kleinberg and Lubetzky for the special case of $(D=U=1)$ with $K$ being finite. Finding the capacity of the SUICP-SNI for arbitrary $K,D$ and $U$ is a challenging open problem. In our previous work, for an SUICP-SNI with arbitrary $K,D$ and $U$, we defined a set $\mathcal{\mathbf{S}}$ of $2$-tuples such that for every $(a,b)$ in that set $\mathcal{\mathbf{S}}$, the rate $D+1+\frac{a}{b}$ is achieved by using vector linear index codes over every finite field. In this paper, we give an algorithm to find the values of $a$ and $b$ such that $(a,b) \in \mathcal{\mathbf{S}}$ and $\frac{a}{b}$ is minimum. We present a new upperbound on the broadcast rate of SUICP-SNI and prove that this upper bound  coincides with the existing results on the exact value of the capacity of SUICP-SNI in the respective settings.
\end{abstract}
\section{Introduction and Background}
\label{sec1}
\IEEEPARstart {A}{n} index coding problem, comprises a transmitter that has a set of $K$ independent messages, $X=\{ x_0,x_1,\ldots,x_{K-1}\}$, and a set of $M$ receivers, $R=\{ R_0,R_1,\ldots,R_{M-1}\}$. Each receiver, $R_k=(\mathcal{K}_k,\mathcal{W}_k)$, knows a subset of messages, $\mathcal{K}_k \subset X$, called its \textit{Known-set} or the \textit{side-information}, and demands to know another subset of messages, $\mathcal{W}_k \subseteq \mathcal{K}_k^\mathsf{c}$, called its \textit{Want-set} or \textit{Demand-set}. The transmitter can take cognizance of the side-information of the receivers and broadcast coded messages, called the index code, over a noiseless channel. The objective is to minimize the number of coded transmissions, called the length of the index code, such that each receiver can decode its demanded message using its side-information and the coded messages.

The problem of index coding with side-information was introduced by Birk and Kol \cite{ISCO}. Ong and Ho \cite{OnH} classified the binary index coding problem depending on the demands and the side-information possessed by the receivers. An index coding problem is unicast if the demand-sets of the receivers are disjoint. An index coding problem is single unicast if the demand-sets of the receivers are disjoint and the cardinality of demand-set of every receiver is one. Any unicast index coding problem can be converted into an equivalent single unicast index coding problem. A single unicast index coding problem (SUICP) can be described as follows: Let $\{x_{0}$,$x_{1}$,\ldots,$x_{K-1}\}$ be the $K$ messages, $\{R_{0}$,$R_{1},\ldots,R_{K-1}\}$ are $K$ receivers and $x_k \in \mathcal{A}$ for some alphabet $\mathcal{A}$ and $k=0,1,\ldots,K-1$. Receiver $R_{k}$ wants the message $x_{k}$ and knows a subset of messages in $\{x_{0}$,$x_{1}$,\ldots,$x_{K-1}\}$ as side-information. 


A solution (includes both linear and nonlinear) of the index coding problem must specify a finite alphabet $\mathcal{A}_P$ to be used by the transmitter, and an encoding scheme $\varepsilon:\mathcal{A}^{t} \rightarrow \mathcal{A}_{P}$ such that every receiver is able to decode the wanted message from $\varepsilon(x_0,x_1,\ldots,x_{K-1})$ and the known information. The minimum encoding length $l=\lceil log_{2}|\mathcal{A}_{P}|\rceil$ for messages that are $t$ bit long ($\vert\mathcal{A}\vert=2^t$) is denoted by $\beta_{t}(G)$. The broadcast rate of the index coding problem with side-information graph $G$ is defined \cite{ICVLP} as,
$\beta(G) \triangleq   \inf_{t} \frac{\beta_{t}(G)}{t}.$
If $t = 1$, it is called scalar broadcast rate. For a given index coding problem, the broadcast rate $\beta(G)$ is the minimum number of index code symbols per message symbol required to transmit to satisfy the demands of all the receivers. The capacity $C(G)$ for the index coding problem is defined as the maximum number of message symbols transmitted per index code symbol such that every receiver gets its wanted message symbols and all the receivers get equal number of wanted message symbols. The broadcast rate and capacity are related as 
\begin{center}	
$C(G)=\dfrac{1}{\beta(G)}$.
\end{center}  

Instead of one transmitter and $K$ receivers, the SUICP can also be viewed as $K$ source-receiver pairs with all $K$ sources connected with all $K$ receivers through a common finite capacity channel and all source-receiver pairs connected with either zero of infinite capacity channels. This problem is called multiple unicast index coding problem in \cite{MCJ}.
\subsection{Single unicast index coding problem with symmetric neighboring interference}
A single unicast index coding problem with symmetric neighboring interference (SUICP-SNI) with equal number of $K$ messages and receivers, is one with each receiver having a total of $U+D<K$ interference, corresponding to the $D~(U \leq D)$ messages above and $U$ messages before its desired message. In this setting, the $k$th receiver $R_{k}$ demands the message $x_{k}$ having the interference
\begin{equation}
\label{antidote}
{I}_k= \{x_{k-U},\dots,x_{k-2},x_{k-1}\}\cup\{x_{k+1}, x_{k+2},\dots,x_{k+D}\}, 
\end{equation}
\noindent
the side-information being 
$\mathcal{K}_{k}=(\mathcal{I}_{k} \cup x_{k})^c.$

Maleki \textit{et al.} \cite{MCJ} found the capacity of SUICP-SNI with $K\rightarrow \infty$ to be  
\begin{align}
\label{cap1}
C=\frac{1}{D+1},
\end{align}
and an upper bound for the capacity of SUICP-SNI for finite $K$ to be 
\begin{align}
\label{outerbound}
C \leq \frac{1}{D+1},
\end{align}
which is same as the broadcast rate of the SUICP-SNI being lower bounded as
\begin{align}
\label{cap4}
\beta \geq D+1.
\end{align}

Blasiak \textit{et al.} \cite{ICVLP} found the capacity of SUICP-SNI with $U=D=1$ by using linear programming bounds to be 
\begin{align}
\label{cap2}
C=\frac{\left\lfloor \frac{K}{2}\right\rfloor}{K}. 
\end{align}

In \cite{VaR4}, we showed that the capacity of SUICP-SNI for arbitrary $K$ and $D$ with $U=\text{gcd}(K,D+1)-1$ is 
\begin{align}
\label{cap3}
C=\frac{1}{D+1}.
\end{align}

\subsection{Review of the known upperbounds} 
In this subsection, we present the known bounds on the broadcast rate of the index coding problems. Let $G$ be the side-information graph of the index coding problem. Let $V(G)$ be the vertex set of the graph $G$.
\subsection*{Broadcast rate and fractional clique cover $\overline{\omega}_f(G)$}

Clique number $\omega(G)$ of a graph $G$ is the size of the largest possible complete subgraph in $G$. Clique cover number $\overline{\omega}(G)$ of a graph G is the minimum number of cliques (complete subgraphs) required to cover the complete vertex set of G. An independent set is the set of vertices of the graph such that no two vertices in this set are adjacent. A fractional clique of a graph G is a non negative real valued function on $V(G)$ such that the sum of the values of the function on the vertices of any independent set is at most one. Fractional clique number $\omega_{f}(G)$ of an undirected graph $G$ is the maximum possible weight of a fractional clique. Blasiak \textit{et al.} in \cite{ICVLP} defined fractional clique cover as a function that assigns a non negative weight to each clique such that for every vertex $x_k \in V(G)$ the total weight assigned to clique containing $x_k$ is atleast one. Fractional clique cover number $\overline{\omega}_{f}(G)$ is defined to be the sum of the weights assigned to the cliques.  

%
%

Blasiak \textit{at~al} in \cite{ICVLP} proved the following upper-bound on broadcast rate
\begin{align}
\label{fcc}
\beta(G) \leq \overline{\omega}_f(G).
\end{align}
\subsection*{Broadcast rate and partial clique cover}
In \cite{ISCO}, Birk and Kol defined partial clique and gave a coding scheme for a given index coding problem based on the partial cliques of the side-information graph. A directed graph $G(V,E)$ is a $k$-partial clique $Clq(s,k)$ iff $\vert V \vert=s$, outdeg$(v) \geq (s-1-k)$, $\forall \ v \in V$, and there exists a $\ v \in V$ such that outdeg$(v)=(s-1-k)$. 


Tehrani \textit{at~al} in \cite{bipartiate} studied bipartite index coding by generalizing partial clique cover scheme. The bipartite index coding is a message partitioning scheme where the sum of the minimum out-degrees are maximized. Assume that, we partition the graph into $l$ disjoint subgraphs induced by message sets $\mathcal{A}_1,\mathcal{A}_2,\ldots,\mathcal{A}_l$, each with the minimum knowledge (out-degree) $d_i$ for $i=1,2,\ldots,l$. Then the subgraph $G_{\mathcal{A}_i}$ can be resolved in $\vert \mathcal{A}_i\vert-d_i$ transmissions and all receivers obtain their wanted messages in $\sum_{i=1}^l (\vert \mathcal{A}_i\vert-d_i)=K-\sum_{i=1}^l d_i$ transmissions. Thus, the optimal partitioning is the solution of the following optimization problem.
\begin{align}
\label{bic}
\nonumber
&\text{maximize} \sum_{i=1}^l d_i \\&
\nonumber
\text{subject ~to} 1 \leq l \leq K ~~\text{and}
\\&\mathcal{A}_1,\mathcal{A}_2,\ldots,\mathcal{A}_l ~\text{is~a~valid~message~decomposition}
\end{align}

Tehrani \textit{at~al} used maximum distance separable (MDS) codes to prove that the proposed upperbound is achievable.
\subsection*{Broadcast rate and fractional local chromatic number $\chi_{l}(\overline{G})$}

The local chromatic number of a directed graph was defined by Korner \textit{et al.} in \cite{korner}. Shanmugam \textit{et al.} in \cite{localcn} used local chromatic number to derive new upper-bound for the broadcast rate of an index coding problem. Let $\overline{G}$ be the complement graph of the graph $G$. 

For a directed graph $\overline{G}$, a coloring of vertices is proper if for every vertex $x_k$ of $\overline{G}$, the color of any of its out-neighbors is different from the color of vertex $x_k$. 
Let $N^+(k)$ be closed outer-neighborhood of a given vertex $x_k$ in the directed graph $\overline{G}$, i.e. $x_j \in N^+(x_k)$ if $(x_k,x_j)$ is a directed edge or $x_k=x_j$. The local chromatic number of a directed graph $\overline{G}$ is the maximum number of colors in any out-neighborhood minimized over all proper colorings of the undirected graph $\overline{G}_u$ obtained from $\overline{G}$ by ignoring the orientation of edges in $\overline{G}$. Let $c: V \rightarrow \{1,2,\ldots,s\}$ be any proper coloring for $\overline{G}_u$ for some integer $s$. Let $|c(N^+(x_k))|$ be the number of colors in the closed out neighborhood of the graph $\overline{G}$. Then,
\begin{align}
\label{lcno}
\chi_l(\overline{G})=\min_{c} \max_{x_k \in V(\overline{G})} |c(N^+(x_k))|.   
\end{align}

Shanmugam \textit{et al.} in \cite{localcn} defined the fractional local chromatic number $\chi_{f_l}(\overline{G})$ of a directed graph $\overline{G}$ and proved that the broadcast rate $\beta(G)$ of an index coding problem defined by side-information graph $G$ is upperbounded by the fractional local chromatic number of $\overline{G}$, that is 
\begin{align}
\label{flcn}
\beta(G) \leq \chi_{f_l}(\overline{G})\leq \chi_{l}(\overline{G}).
\end{align}

Shanmugam \textit{et al.} used MDS codes to prove that the proposed upperbound is achievable.


\subsection*{Broadcast rate and fractional local partial clique covering}

Arbabjolfaei \textit{et al.} in \cite{localtime}, combined the ideas of partial clique covering and fractional local clique covering to establish the fractional local partial clique covering bound.

All upperbounds presented above are graph theory based, whereas, the upper bound presented in this paper is derived by using the properties of extended Euclid algorithm. 

The upperbounds proposed in \cite{bipartiate},\cite{localcn} and \cite{localtime} use MDS codes to prove that the upperbound is achievable. The upperbound presented in this paper can be achieved over every finite field.
\subsection{Review of AIR matrices}
In \cite{VaR2}, we constructed binary matrices of size $m \times n (m\geq n)$ such that any $n$ adjacent rows of the matrix are linearly independent over every finite field. We refer these matrices as AIR matrices.

The matrix obtained by Algorithm \ref{algo2} is called the $(m,n)$ AIR matrix and it is denoted by $\mathbf{L}_{m\times n}.$ The general form of the $(m,n)$ AIR matrix is shown in   Fig. \ref{fig1}. It consists of several submatrices (rectangular boxes) of different sizes as shown in Fig.\ref{fig1}. 
\begin{figure*}
\centering
\includegraphics[scale=0.53]{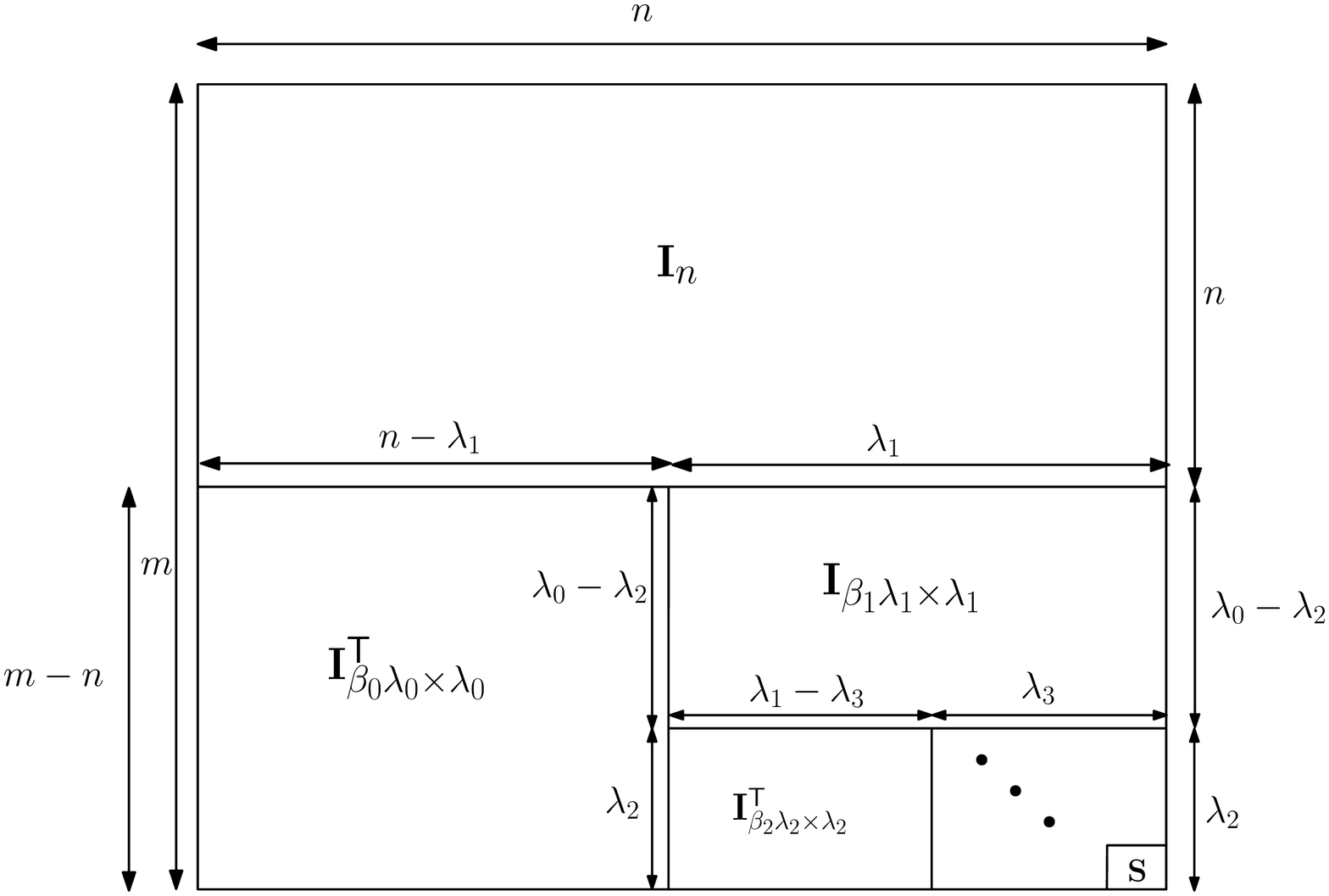}\\
~ $\mathbf{S}=\mathbf{I}_{\lambda_{l} \times \beta_l \lambda_{l}}$ if $l$ is even and ~$\mathbf{S}=\mathbf{I}_{\beta_l\lambda_{l} \times \lambda_{l}}$ otherwise.
\caption{AIR matrix of size $m \times n$.}
\label{fig1}
~ \\
\hrule
\end{figure*}
The description of the submatrices are as follows: Let $c$ and $d$ be two positive integers and $d$ divides $c$. The following matrix  denoted by $\mathbf{I}_{c \times d}$ is a rectangular  matrix.
\begin{align}
\label{rcmatrix}
\mathbf{I}_{c \times d}=\left.\left[\begin{array}{*{20}c}
   \mathbf{I}_{d}  \\
   \mathbf{I}_{d}  \\
   \vdots  \\
   \mathbf{I}_{d} 
   \end{array}\right]\right\rbrace \frac{c}{d}~\text{number~of}~ \mathbf{I}_{d}~\text{matrices}
\end{align}
and $\mathbf{I}_{d \times c}$ is the transpose of $\mathbf{I}_{c \times d}.$

Towards explaining the other quantities shown in the AIR matrix shown in Fig. \ref{fig1}, for a given $m$  and $n,$ let  $\lambda_{-1}=n,\lambda_0=m-n$ and\begin{align}
\nonumber
n&=\beta_0 \lambda_0+\lambda_1, \nonumber \\
\lambda_0&=\beta_1\lambda_1+\lambda_2, \nonumber \\
\lambda_1&=\beta_2\lambda_2+\lambda_3, \nonumber \\
\lambda_2&=\beta_3\lambda_3+\lambda_4, \nonumber \\
&~~~~~~\vdots \nonumber \\
\lambda_i&=\beta_{i+1}\lambda_{i+1}+\lambda_{i+2}, \nonumber \\ 
&~~~~~~\vdots \nonumber \\ 
\lambda_{l-1}&=\beta_l\lambda_l.
\label{chain}
\end{align}
where $\lambda_{l+1}=0$ for some integer $l,$ $\lambda_i,\beta_i$ are positive integers and $\lambda_i < \lambda_{i-1}$ for $i=1,2,\ldots,l$. The number of submatrices in the AIR matrix is $l+2$ and the size of each submatrix is shown using $\lambda_i,\beta_i,$  $i \in [0:l].$

		\begin{algorithm}
		\caption{Algorithm to construct the AIR matrix $\mathbf{L}$ of size $m \times n$}
			\begin{algorithmic}[2]
				 \item Let $\mathbf{L}=m \times n$ blank unfilled matrix.
				\item [Step 1]~~~
				\begin{itemize}
				\item[\footnotesize{1.1:}] Let $m=qn+r$ for $r < n$.
				\item[\footnotesize{1.2:}] Use $\mathbf{I}_{qn \times n}$ to fill the first $qn$ rows of the unfilled part of $\mathbf{L}$.
				\item[\footnotesize{1.3:}] If $r=0$,  Go to Step 3.
				\end{itemize}

				\item [Step 2]~~~
				\begin{itemize}
				\item[\footnotesize{2.1:}] Let $n=q^{\prime}r+r^{\prime}$ for $r^{\prime} < r$.
				\item[\footnotesize{2.2:}] Use $\mathbf{I}_{q^{\prime}r \times r}^{\mathsf{T}}$ to fill the first $q^{\prime}r$ columns of the unfilled part of $\mathbf{L}$.
			    \item[\footnotesize{2.3:}] If $r^{\prime}=0$,  go to Step 3.	
				\item[\footnotesize{2.4:}] $m\leftarrow r$ and $n\leftarrow r^{\prime}$.
				\item[\footnotesize{2.5:}] Go to Step 1.
				\end{itemize}
				\item [Step 3] Exit.
		
			\end{algorithmic}
			\label{algo2}
		\end{algorithm}
			
In \cite{VaR2}, we gave an optimal scalar linear index code for the single unicast index coding problems with symmetric neighboring consecutive side information (SUICP-SNC)(one-sided) using AIR encoding matrices. In \cite{VaR1}, we constructed optimal vector linear index codes for  SUICP-SNC (two-sided). In \cite{VaR3}, we gave a low-complexity decoding for SUICP-SNC with AIR matrix as encoding matrix. The low complexity decoding method helps to identify a reduced set of side-information for each user with which the decoding can be carried out. By this method every receiver is able to decode its wanted message symbol by simply adding some index code symbols (broadcast symbols).

\subsection{Contributions}
Jafar \cite{TIM} established the relation between index coding problem and topological interference management problem. The capacity and optimal coding results in index coding can be used in corresponding topological interference management problems. In \cite{VaR5}, for SUICP-SNI with arbitrary $K,D$ and $U$, we define a set $\mathcal{\mathbf{S}}$ of $2$-tuples such that for every $(a,b) \in \mathcal{\mathbf{S}}$, the rate $D+1+\frac{a}{b}$ is achievable by using AIR matrices with vector linear index codes over every finite field. The contributions of this paper are summarized below:
\begin{itemize}
\item We give an algorithm to find the values of $a$ and $b$ such that $(a,b) \in \mathcal{\mathbf{S}}$ and $\frac{a}{b}$ is minimum.

\item We give an upperbound $R_{airm}(K,D,U)$ on the broadcast rate of SUICP-SNI. We prove that $R_{airm}(K,D,U)$ coincide with the existing results on the exact value of the capacity of SUICP-SNI given in \eqref{cap1},\eqref{cap2} and \eqref{cap3} in the respective settings. 
\end{itemize}


Henceforth,  we refer SUICP-SNI with $K$ messages, $D$ interfering messages after and $U$ interfering messages before the desired message as $(K,D,U)$ SUICP-SNI.

The remaining part of this paper is organized as follows. In Section \ref{sec2}, for $(K,D,U)$ SUICP-SNI, we define a set $\mathbf{S}$ of $2$-tuples such that for every $(a,b) \in \mathcal{\mathbf{S}}$, the rate $D+1+\frac{a}{b}$ is achievable by using AIR matrices with vector linear index codes over every finite field (Theorem \ref{thm1}). In Section \ref{sec3}, we give an algorithm to find the values of $a$ and $b$ such that $(a,b) \in \mathcal{\mathbf{S}}$ and $\frac{a}{b}$ is minimum. In  Section \ref{sec4}, we prove some properties of $R_{airm}(K,D,U)$.
We conclude the paper in Section \ref{sec5}.

All the subscripts in this paper are to be considered $~\text{\textit{modulo}}~ K$. 
\section{Vector Linear Index Codes of SUICP-SNI: Achievability Results}
\label{sec2}
In this section, we define a set $\mathcal{\mathbf{S}}$ consisting of pairs of integers $(a,b)$ and prove that the rate $D+1+\frac{a}{b}$ for every $(a,b) \in \mathcal{\mathbf{S}}$ is achievable by using an appropriate sized AIR matrix as the encoding matrix.
\begin{definition}
\label{def1}
Consider the SUICP-SNI with $K$ messages, $D$ and $U$ interfering messages after and before the desired message. For this SUICP-SNI, define the set $\mathbf{S}_{K,D,U}$ as  
\begin{align}
\label{ab}
\mathbf{S}_{K,D,U}=\{(a,b):\text{gcd}(bK,b(D+1)+a)\geq b(U+1)\}
\end{align}
for $a \in Z_{\geq 0}$ and $b \in Z_{>0}$. 
\end{definition}


\begin{theorem}
\label{thm1}
Consider a $(K,D,U)$ SUICP-SNI. For this index coding problem, for every $(a,b) \in \mathbf{S}_{K,D,U}$, the rate $D+1+\frac{a}{b}$ can be achieved by $b$-dimensional vector linear index coding by using the AIR matrix of size $Kb \times (b(D+1)+a)$. 
\end{theorem}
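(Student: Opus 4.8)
The plan is to exhibit an explicit $b$-dimensional vector linear index code of length $b(D+1)+a$ for the $(K,D,U)$ SUICP-SNI whenever $(a,b)\in\mathbf{S}_{K,D,U}$, using the AIR matrix $\mathbf{L}_{Kb\times(b(D+1)+a)}$ as the encoding matrix. First I would set up the vectorized problem: replace each message $x_k$ by a length-$b$ vector over the finite field, so that the total message length is $Kb$, and let the transmitter broadcast the $b(D+1)+a$ symbols obtained by multiplying the stacked message vector by $\mathbf{L}_{Kb\times(b(D+1)+a)}$. The target rate is then $\frac{b(D+1)+a}{b}=D+1+\frac{a}{b}$, as claimed. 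The side-information structure of receiver $R_k$ in the vectorized problem is that it knows all message vectors except the $U+D+1$ consecutive blocks indexed by $k-U,\dots,k,\dots,k+D$; equivalently, after cancelling the contributions of known blocks, $R_k$ must recover its own block of $b$ coordinates from the residual code, which is a linear combination of the $(U+D+1)b$ unknown rows of $\mathbf{L}$ corresponding to a window of $(U+D+1)b$ \emph{consecutive} rows (indices modulo $Kb$).

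The key step is to invoke the defining property of AIR matrices from \cite{VaR2}: any $n=b(D+1)+a$ consecutive rows of $\mathbf{L}_{m\times n}$ with $m=Kb$ are linearly independent over every finite field. Since $(U+1)b\le (D+1)b+a=n$ (which follows because $\text{gcd}(bK,b(D+1)+a)\ge b(U+1)$ forces $b(U+1)\le b(D+1)+a$), the $(U+D+1)b$ consecutive rows seen by $R_k$ form a window that is still manageable; I would decompose this window into the $b$ rows desired by $R_k$ together with the interfering $(U+D)b$ rows, and argue that the desired $b$ rows can be recovered by a suitable linear decoding map. Concretely, because consecutive windows of length $n$ are full rank, the rank structure lets each receiver peel off exactly its wanted block; this is the same mechanism used in \cite{VaR1,VaR3} for the SUICP-SNC, adapted to the two-sided interference with parameter shift $D$. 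The role of the gcd condition in Definition \ref{def1} is precisely to guarantee that the periodic pattern of $\mathbf{L}_{Kb\times n}$ (which has an essentially cyclic block structure with period governed by $\gcd(bK, n)$) aligns so that every one of the $K$ receivers' interference windows decodes correctly, not just a single receiver.

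The main obstacle I anticipate is verifying the decodability condition uniformly over all $K$ receivers, i.e. translating the number-theoretic inequality $\gcd(bK,\,b(D+1)+a)\ge b(U+1)$ into the linear-algebraic statement that for every $k$ the desired $b$-dimensional subspace is \emph{not} contained in the span of the interfering rows within $R_k$'s window. This requires a careful analysis of which rows of the AIR matrix repeat (the $\mathbf{I}_d$-block structure in \eqref{rcmatrix} and Fig.~\ref{fig1}) and how the $\gcd$ controls the length of the guaranteed-independent run relative to the $(U+1)b$-periodicity needed to separate each receiver's demand from its interference. I would handle this by reducing to the case $b=1$ with $K\to bK$, $D+1\to$ the first block size, citing the consecutive-independence property as a black box, and then checking that the $U$-sided interference of size $(U+1)b$ fits within the independence guarantee exactly when the gcd bound holds — this equivalence is the crux and I expect it to consume most of the proof. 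Once decodability is established for one receiver, symmetry (all subscripts modulo $K$, as noted in the paper) extends it to all receivers, completing the achievability argument.
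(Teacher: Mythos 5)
The paper itself does not prove Theorem \ref{thm1} internally; it simply invokes the proof in \cite{VaR5}, so the only basis for judging your argument is whether it would stand on its own. It does not yet: the decisive step is missing. You correctly set up the vectorization, the rate computation $\frac{b(D+1)+a}{b}=D+1+\frac{a}{b}$, and the relevant decodability criterion (no nonzero combination of the $b$ wanted rows of $\mathbf{L}_{Kb\times n}$ may lie in the span of the $(U+D)b$ interfering rows), but you never establish that criterion. Your intermediate assertion that ``because consecutive windows of length $n$ are full rank, the rank structure lets each receiver peel off exactly its wanted block'' does not follow from the AIR property alone: receiver $R_k$'s unknown window consists of $(U+D+1)b$ consecutive rows, which in the regime of interest (e.g.\ $a=a_{min}$, so typically $Ub>a$) exceeds $n=b(D+1)+a$, so the system seen by $R_k$ is underdetermined and linear independence of any $n$ adjacent rows gives no direct conclusion about recovering the wanted block. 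The whole content of the theorem is that the number-theoretic condition $\gcd(bK,\,b(D+1)+a)\ge b(U+1)$ forces the two-sided interference to align inside the block structure of the AIR matrix so that the wanted $b$ coordinates remain separable at every one of the $K$ receivers; you explicitly defer this (``this equivalence is the crux and I expect it to consume most of the proof''), which means the proposal is a plan rather than a proof.

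Two further points would need repair even as a plan. First, the proposed ``reduction to the case $b=1$ with $K\to bK$'' is not innocuous: after vectorization the interference is block-structured (blocks of $b$ consecutive rows per message), not an arbitrary run of $(U+D+1)b$ scalar positions, and the symmetric two-sided SNI structure is not literally an instance of the one-sided SUICP-SNC treated in \cite{VaR2,VaR3}, so citing those results as a black box does not close the argument. Second, the appeal to ``symmetry'' over the $K$ receivers is fine only once decodability is proved for a generic receiver in a way that is genuinely invariant under the cyclic shift; since the AIR matrix is not circulant (its submatrix sizes $\lambda_i,\beta_i$ vary down the matrix, per Fig.~\ref{fig1}), this uniformity itself requires proof rather than a one-line symmetry remark. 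To make the argument complete you would need to carry out the explicit decoding construction of \cite{VaR5} (or an equivalent interference-alignment argument in the spirit of \cite{MCJ}) showing exactly how the $\gcd$ hypothesis yields, for every $k$, a set of code symbols whose restriction to $R_k$'s unknowns is invertible on the wanted block.
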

\begin{proof}
Proof is given in \cite{VaR5}
\end{proof}
\begin{remark}
The AIR encoding matrix $\mathbf{L}_{Kb \times (b(D+1)+a)}$ is an encoding matrix over every finite field. Hence, the encoding for $(K,D,U)$ SUICP-SNI given in Theorem \ref{thm1} is independent of field size.
\end{remark}
\section{Algorithm to find $R_{airm}(K,D,U)$ }
\label{sec3}

\begin{definition}
\label{def2}
Consider the $(K,D,U)$ SUICP-SNI. For this SUICP-SNI, define the sets $\mathbf{S},\mathbf{S}_{r},\mathbf{S}^\prime$ and $\mathbf{S}_r^\prime$ as  
\begin{align}
\label{ab1}
\mathbf{S}=\{(a,b):\text{gcd}(bK,b(D+1)+a)\geq b(U+1)\}
\end{align}
for $a,b \in Z_{>0}$, 
\begin{align}
\label{ab2}
\mathbf{S}_{r}=\{\frac{a}{b}:(a,b) \in \mathbf{S}\},
\end{align}
\begin{align}
\label{ab3}
\nonumber
\mathbf{S}^\prime=\{(a,b):&\text{gcd}(bK,b(D+1)+a)=\\&~~~~~~~~~\text{gcd}(b,m^\prime)K\geq b(U+1)\}
\end{align}
for $m^\prime \in Z_{> 0}$ such that  $b(D+1)+a=m^\prime K$ and 
\begin{align}
\label{ab4}
\mathbf{S}_r^\prime=\{\frac{a}{b}:(a,b) \in \mathbf{S}^\prime\}.
\end{align}
\end{definition}

\begin{lemma}
Let $\mathbf{S}_{r}$ and $\mathbf{S}_{r}^\prime$ be the sets defined in \eqref{ab2} and \eqref{ab4} respectively. Then,
\begin{align*}
\mathbf{S}_{r}= \mathbf{S}_r^\prime.
\end{align*}
\end{lemma}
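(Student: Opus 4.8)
The plan is to prove the two set inclusions $\mathbf{S}_r^\prime \subseteq \mathbf{S}_r$ and $\mathbf{S}_r \subseteq \mathbf{S}_r^\prime$ separately. The first inclusion is essentially immediate: if $(a,b) \in \mathbf{S}^\prime$, then by definition $\gcd(bK, b(D+1)+a) = \gcd(b,m^\prime)K \geq b(U+1)$, which in particular says that $\gcd(bK,b(D+1)+a) \geq b(U+1)$, so $(a,b) \in \mathbf{S}$ and hence $\frac{a}{b} \in \mathbf{S}_r$. Thus $\mathbf{S}_r^\prime \subseteq \mathbf{S}_r$ with no real work.

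The substance is the reverse inclusion. Suppose $\frac{a}{b} \in \mathbf{S}_r$, i.e. there exist $a, b \in \mathbb{Z}_{>0}$ with $\gcd(bK, b(D+1)+a) \geq b(U+1)$. I want to produce a (possibly different) representative $\frac{a'}{b'} = \frac{a}{b}$ of the same rational number with $(a',b') \in \mathbf{S}^\prime$. The natural move is to write $g = \gcd(bK, b(D+1)+a)$ and observe $g \mid bK$; the condition $g \geq b(U+1)$ together with $U+1 \leq D+1$ and the structure of the problem suggests that $g$ must in fact be a multiple of $K$ scaled appropriately — more precisely, I expect that after clearing, $b(D+1)+a$ is forced to be a multiple of some $m'K$ on a rescaled version of the pair. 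The concrete approach: let $t = \frac{bK}{g}$, so $g = \frac{bK}{t}$; then rescale the pair $(a,b)$ by an appropriate integer factor so that the new $b$ value, call it $\tilde b$, satisfies $t \mid \tilde b$, forcing $\tilde g := \gcd(\tilde b K, \tilde b (D+1)+\tilde a)$ to have the shape $\gcd(\tilde b, m')K$ where $\tilde b(D+1)+\tilde a = m'K$. One should check that scaling $(a,b) \mapsto (ca, cb)$ leaves $\frac{a}{b}$ fixed, multiplies the gcd by $c$, and preserves membership in $\mathbf{S}$, so we have freedom to choose $c$ to arrange the divisibility $t \mid cb$ — e.g. $c = t$ works, or $c = t/\gcd(t,b)$.

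The key lemma to extract and verify is: whenever $\gcd(bK, b(D+1)+a) \geq b(U+1)$ and (say) $b \mid$ something forcing the gcd to be a multiple of... — actually the cleanest formulation I would aim for is that $\gcd(bK, b(D+1)+a) \geq b(U+1) > 0$ combined with $\gcd(bK,\cdot)$ dividing $bK$ implies, once $b$ is chosen so that $\frac{bK}{\gcd(bK,b(D+1)+a)}$ divides $b$, that $K \mid b(D+1)+a$; then setting $m' = \frac{b(D+1)+a}{K}$ gives $\gcd(bK, m'K) = K\gcd(b,m')$ directly. So the real content is showing the rescaled pair both stays in $\mathbf{S}$ and achieves $K \mid b(D+1)+a$. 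I would verify: (i) $c = \frac{bK}{g}$ where $g = \gcd(bK, b(D+1)+a)$ gives $cb = \frac{b^2 K}{g}$ and $c(b(D+1)+a)$, and their gcd is $cg = bK = cb \cdot \frac{K}{b}$... and then check $K \mid c(b(D+1)+a)$ since $cg = bK$ divides... — the arithmetic here needs care but is routine.

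The main obstacle I anticipate is pinning down the exact rescaling factor $c$ and checking rigorously that the rescaled pair $(ca, cb)$ lands in $\mathbf{S}^\prime$ — specifically that the divisibility $K \mid cb(D+1) + ca$ genuinely holds and that $\gcd(cbK, cb(D+1)+ca)$ still dominates $cb(U+1)$ (this last part is automatic since gcd scales linearly and the original inequality scales the same way). Once the divisibility is in hand, the identity $\gcd(cbK, m'K) = \gcd(cb, m')K$ is a standard gcd fact, and $\frac{ca}{cb} = \frac{a}{b}$ closes the argument, giving $\mathbf{S}_r \subseteq \mathbf{S}_r^\prime$ and hence equality.
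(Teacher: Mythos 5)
Your proposal is correct and follows essentially the same route as the paper: the inclusion $\mathbf{S}_r^\prime \subseteq \mathbf{S}_r$ is immediate, and the reverse inclusion is obtained by rescaling the pair $(a,b)$ so that $b(D+1)+a$ becomes a multiple of $K$, using the facts that $\frac{a}{b}$, the gcd, and the inequality all scale linearly. The paper simply takes the scaling factor to be $K$ itself, i.e.\ passes from $(a,b)$ to $(Ka,Kb)$, which makes the divisibility $K \mid Kb(D+1)+Ka$ trivial and avoids the arithmetic with $c=\frac{bK}{g}$ that you were (correctly, but unnecessarily) worrying about.
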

\begin{proof}
If $\frac{a}{b} \in \mathbf{S}_r$, we prove that $\frac{a}{b} \in \mathbf{S}^\prime_r$. To prove this, we prove that for every $(a,b) \in \mathbf{S}$, there exists $(K a,K b) \in \mathbf{S}^\prime$.

Let $(a,b) \in \mathbf{S}$. Let 
\begin{align}
\label{set4}
\text{gcd}(bK,b(D+1)+a)=b(U+1)+c
\end{align}
for some $c \in Z_{\geq 0}$.

From \eqref{set4}, by multiplying both the sides with $K$, we have 
\begin{align*}
&\text{gcd}( (Kb)K,(Kb)(D+1)+(Ka))=\\&\text{gcd}(b^\prime K,b^\prime(D+1)+a^\prime)=K\text{gcd}(b^\prime,m^\prime)=b^\prime(U+1)+c^\prime,
\end{align*}
where $a^\prime=Ka,b^\prime=Kb$ and $c^\prime=Kc$.
Hence, $(a^\prime, b^\prime) \in \mathbf{S}^\prime$ and $\frac{a^\prime}{b^\prime}=\frac{a}{b} \in \mathbf{S}^\prime_{r}$. We have
\begin{align*}
\mathbf{S}_{r} \subseteq \mathbf{S}^\prime_{r}.
\end{align*}

If $(c,d) \in  \mathbf{S}_r^\prime$, this $(c,d)$ also satisfy the condition $\text{gcd}(dK,d(D+1)+c) \geq d(U+1)$. We have $(c,d) \in \mathbf{S}$ and $\frac{c}{d} \in  \mathbf{S}^\prime$. Hence, we have
\begin{align*}
\mathbf{S}^\prime_{r} \subseteq \mathbf{S}_{r}.
\end{align*}

This completes the proof.
\end{proof}

\begin{lemma}
\label{lemmagcd}
Let $\frac{\alpha_1}{\alpha_2} \in  \mathbf{S}_r^\prime$ and $\text{gcd}(\alpha_1,\alpha_2)=1$. Let $\alpha_1=\gamma K-\alpha_2 (D+1)$ for some $\gamma \in Z_{>0}$. Then, $\text{gcd}(\gamma,\alpha_2)=1$.
\end{lemma}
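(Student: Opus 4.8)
The plan is to derive a contradiction from the assumption that $\text{gcd}(\gamma,\alpha_2) = g > 1$, using the defining membership condition of $\mathbf{S}_r^\prime$ together with the coprimality $\text{gcd}(\alpha_1,\alpha_2)=1$. First I would write out what it means for $\frac{\alpha_1}{\alpha_2} \in \mathbf{S}_r^\prime$: since $\alpha_1 = \gamma K - \alpha_2(D+1)$, we have $\alpha_2(D+1) + \alpha_1 = \gamma K$, so the integer $m^\prime$ appearing in \eqref{ab3} is exactly $\gamma$, and the membership condition reads
\begin{align*}
\text{gcd}(\alpha_2 K, \alpha_2(D+1)+\alpha_1) = \text{gcd}(\alpha_2,\gamma)K \geq \alpha_2(U+1).
\end{align*}

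Now suppose $g = \text{gcd}(\gamma,\alpha_2) > 1$. The key step is to show this forces $g \mid \alpha_1$, which contradicts $\text{gcd}(\alpha_1,\alpha_2)=1$. Indeed, $g \mid \alpha_2$ implies $g \mid \alpha_2(D+1)$, and $g \mid \gamma$ implies $g \mid \gamma K$; subtracting, $g \mid \gamma K - \alpha_2(D+1) = \alpha_1$. But then $g$ is a common divisor of $\alpha_1$ and $\alpha_2$ exceeding $1$, contradicting $\text{gcd}(\alpha_1,\alpha_2)=1$. Hence $g=1$, i.e. $\text{gcd}(\gamma,\alpha_2)=1$.

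I expect the main (and essentially only) subtlety is confirming that the $m^\prime$ in the definition of $\mathbf{S}^\prime$ is indeed $\gamma$ and not some proper divisor or multiple — that is, that the representation $\alpha_2(D+1)+\alpha_1 = \gamma K$ is the relevant one. This is immediate from the hypothesis $\alpha_1 = \gamma K - \alpha_2(D+1)$ with $\gamma \in Z_{>0}$, so no real obstacle arises; the bulk of the argument is the one-line divisibility deduction $g \mid \alpha_1$ above. One should also note that $\gamma > 0$ (given) and $\alpha_1 \geq 0$, $\alpha_2 > 0$ guarantee all gcd's are taken of positive integers so the statements are well-posed.
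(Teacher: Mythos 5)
Your proposal is correct and uses essentially the same argument as the paper: assume $\text{gcd}(\gamma,\alpha_2)>1$, observe it divides $\gamma K-\alpha_2(D+1)=\alpha_1$, and contradict $\text{gcd}(\alpha_1,\alpha_2)=1$. The preliminary discussion of the membership condition and of $m^\prime=\gamma$ is harmless but not needed, since (as in the paper) the contradiction follows from the relation $\alpha_1=\gamma K-\alpha_2(D+1)$ and coprimality alone.
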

\begin{proof}
Let $\text{gcd}(\gamma,\alpha_2)>1$. We have
\begin{align}
\label{set25}
\nonumber
\alpha_1&=\text{gcd}(\gamma,\alpha_2)\frac{\gamma}{\text{gcd}(\gamma,\alpha_2)}K-\\& \text{gcd}(\gamma,\alpha_2)\frac{\alpha_2}{\text{gcd}(\gamma,\alpha_2)}(D+1).
\end{align}
From \eqref{set25}, $\text{gcd}(\gamma,\alpha_2)$ is factor of $\alpha_1$ and $\text{gcd}(\alpha_1,\alpha_2)\geq \text{gcd}(\gamma,\alpha_2) > 1$, which is a contradiction. Hence, $\text{gcd}(\gamma,\alpha_2)=1$.
\end{proof}

For the given positive integers $K$ and $D$, extended Euclidean algorithm can be used to find the coefficients of Bezout's identity $m$ and $n$ such that 
\begin{align}
\label{gcd1}
\text{gcd}(K,D+1)=mK-n(D+1).
\end{align}

The integers $K$ and $D$ can be written as
\begin{align}
\label{gcd2}
0=\frac{(D+1)}{\text{gcd}(K,D+1)}K-\frac{K}{\text{gcd}(K,D+1)}(D+1).
\end{align}
The equation 
\begin{align}
\label{gcd3}
x=m^{\prime}K-n^{\prime}(D+1)
\end{align}
has no integer solution \cite{nt} if $x$ is not a integer multiple of $\text{gcd}(K,D+1)$ and has infinite number of solutions if $x$ is a integer multiple of $\text{gcd}(K,D+1)$. If $x=l\text{gcd}(K,D+1)$ for any $l \in Z_{\geq 0}$, then the infinitely many solutions to \eqref{gcd3} can be found by adding $l$ times \eqref{gcd1} with $t$ times \eqref{gcd2} for any $t \in Z$ and the solutions are 
\begin{align}
\label{gcd4}
\nonumber
&m^{\prime}=lm + t\frac{D+1}{\text{gcd}(K,D+1)}~~ \text{and} \\& n^{\prime}=ln + t\frac{K}{\text{gcd}(K,D+1)}.
\end{align}

From \eqref{gcd3}, we have 
\begin{align}
\label{gcd41}
\nonumber
&l\text{gcd}(K,D+1)=m^{\prime}K-n^{\prime}(D+1)~\text{and}~\\&
m^{\prime}K=n^{\prime}(D+1)+l\text{gcd}(K,D+1).
\end{align}
\begin{definition}
\label{def5}
Let $a_{min}=\min_{(a,b) \in \mathbf{S}^\prime}~a.$
\end{definition}
\begin{lemma}
There exists only one $b$ such that $(a_{min},b) \in \mathbf{S}^\prime$.
\end{lemma}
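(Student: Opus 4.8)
The plan is to use the minimality of $a_{min}$ to force a coprimality relation between $b$ and its multiplier $m'$, and then to read off $b$ from the description of the solutions of $b(D+1)+a=m'K$ given in \eqref{gcd1}--\eqref{gcd41}. Throughout write $g=\gcd(K,D+1)$, $K=g\bar K$ and $D+1=g\bar D$ with $\gcd(\bar K,\bar D)=1$. First observe that $\mathbf{S}^\prime$ is non-empty: for any $b$, taking $m'$ a sufficiently large multiple of $b$ and $a=m'K-b(D+1)>0$ gives $\gcd(bK,b(D+1)+a)=bK\ge b(U+1)$ because $U+1\le K$, so $(a,b)\in\mathbf{S}^\prime$; hence $a_{min}$ is well defined.

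The key preliminary step is: if $(a_{min},b)\in\mathbf{S}^\prime$ and $m'\in Z_{>0}$ satisfies $b(D+1)+a_{min}=m'K$, then $\gcd(b,m')=1$. Suppose not, and set $g'=\gcd(b,m')>1$. Since $a_{min}=m'K-b(D+1)$ is divisible by $g'$, dividing $b(D+1)+a_{min}=m'K$ by $g'$ yields a relation $\tfrac{b}{g'}(D+1)+\tfrac{a_{min}}{g'}=\tfrac{m'}{g'}K$ among positive integers with $\gcd(\tfrac{b}{g'},\tfrac{m'}{g'})=1$; moreover the defining inequality $g'K=\gcd(bK,b(D+1)+a_{min})\ge b(U+1)$ gives $K\ge\tfrac{b}{g'}(U+1)$, so $(\tfrac{a_{min}}{g'},\tfrac{b}{g'})\in\mathbf{S}^\prime$ with first coordinate $<a_{min}$, contradicting Definition \ref{def5}. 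Therefore $\gcd(b,m')=1$, and consequently $\gcd(bK,b(D+1)+a_{min})=K\ge b(U+1)$; in particular every admissible $b$ lies in $[1,\tfrac{K}{U+1}]$, and $a_{min}$ is the least positive multiple of $g$ representable as $m'K-b(D+1)$ by a coprime pair $(b,m')$.

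Now suppose $(a_{min},b_1),(a_{min},b_2)\in\mathbf{S}^\prime$ with multipliers $m'_1,m'_2$. Subtracting $b_i(D+1)+a_{min}=m'_iK$ gives $(b_1-b_2)\bar D=(m'_1-m'_2)\bar K$, so $\bar K\mid b_1-b_2$. Equivalently, by \eqref{gcd4} the solutions of $a_{min}=m'K-b(D+1)$ all have $b\equiv\tfrac{a_{min}}{g}\,n\pmod{\bar K}$, where $n$ is the Bezout coefficient in \eqref{gcd1}. Thus any two admissible values of $b$ belong to one residue class modulo $\bar K=K/g$ and, if distinct, differ by at least $\bar K$.

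It then remains to show that this residue class meets the interval $[1,\tfrac{K}{U+1}]$ in exactly one point, and I expect this to be the main obstacle. When $U+1\ge\gcd(K,D+1)$ it is immediate — then $\tfrac{K}{U+1}\le\bar K$, so two distinct members of a single residue class modulo $\bar K$, differing by at least $\bar K$, cannot both lie in the interval — and this already covers the setting relevant to \eqref{cap3}, where $U+1=\gcd(K,D+1)$. In the remaining cases one must use the explicit value of $a_{min}$ (equivalently the Bezout coefficient $n$ of \eqref{gcd1}) together with the coprimality $\gcd(b,m')=1$ established above to locate the residue class precisely and to argue that a second admissible $b$ in $[1,\tfrac{K}{U+1}]$ would contradict the minimality in Definition \ref{def5}; carrying out this arithmetic reconciliation of $U$, $D$, $K$ and $a_{min}$ is the delicate part that completes the proof.
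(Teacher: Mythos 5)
Your write-up correctly carries out the real work of the lemma, and in the case it does settle it follows essentially the paper's route: minimality of $a_{min}$ forces $\gcd(b,m^\prime)=1$ (your direct descent argument here is in fact cleaner than the paper's appeal to Lemma \ref{lemmagcd}, whose coprimality hypothesis is not verified there), hence $\gcd(bK,b(D+1)+a_{min})=K\geq b(U+1)$, so every admissible $b$ lies in $[1,K/(U+1)]$, while any two admissible $b$'s differ by a nonzero multiple of $K/\gcd(K,D+1)$; when $U+1\geq\gcd(K,D+1)$ these two facts are incompatible and uniqueness follows.

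The genuine gap is the case $U+1<\gcd(K,D+1)$, which you explicitly leave open as ``the delicate part.'' No arithmetic reconciliation can close it, because in that regime the statement is false. Take $K=12$, $D=5$, $U=1$, so $\gcd(K,D+1)=6>U+1=2$. Here $a_{min}=6$ (since $6b+a\equiv 0 \pmod{12}$ forces $a\in\{6,12,\dots\}$), and $(6,1)$, $(6,3)$, $(6,5)$ all lie in $\mathbf{S}^\prime$: e.g.\ $3\cdot 6+6=2\cdot 12$ with $\gcd(3,2)K=12\geq 3(U+1)=6$, and $5\cdot 6+6=3\cdot 12$ with $\gcd(5,3)K=12\geq 10$. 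The paper resolves this not by sharper number theory but by excluding the regime mid-proof: it assumes $U+1>\gcd(K,D+1)$, on the grounds that when $U+1\leq\gcd(K,D+1)$ the scalar linear code with $a=0$, $b=1$ already achieves the optimal rate $D+1$ (the setting of \eqref{cap3}), so the lemma is only ever needed when $U+1>\gcd(K,D+1)$ --- precisely the case you proved. To repair your proof, replace the hoped-for ``arithmetic reconciliation of $U$, $D$, $K$ and $a_{min}$'' by this standing assumption (or an explicit restriction in the statement); as the counterexample shows, without it Definition \ref{def5} does not pin down a unique $b_{min}$.
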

\begin{proof}
Let $(a_{min},b),(a_{min},b^\prime) \in \mathbf{S}^\prime$. Without loss of generality, we assume $b^\prime>b$. Let 
\begin{align}
\label{set30}
b(D+1)+a_{min}=mK
\end{align}
and
\begin{align}
\label{set31}
b^\prime(D+1)+a_{min}=m^\prime K
\end{align}
for some $m,m^\prime \in Z_{>0}$. We have
\begin{align}
\label{set16}
\nonumber
\text{gcd}(bK,b(D+1)+a_{min})&=\text{gcd}(b,m)K\\&=b(U+1)+c
\end{align}
for some $c \in Z_{>0}$, and
\begin{align}
\label{set17}
\nonumber
\text{gcd}(b^\prime K,b^\prime(D+1)+a_{min})&=\text{gcd}(b^\prime,m^\prime)K\\&=b^\prime(U+1)+c^\prime
\end{align}
for some $c^\prime \in Z_{>0}$.

From Definition \ref{def5} and Lemma \ref{lemmagcd}, we have
\begin{align}
\label{set35}
\text{gcd}(b,m)=1~\text{and}~\text{gcd}(b^\prime,m^\prime)=1.
\end{align}

From \eqref{set30} and \eqref{set31}, we have
\begin{align*}
(m^\prime-m)K=(b^\prime-b)(D+1).
\end{align*}
From \eqref{gcd4}, we have
\begin{align}
\label{set20}
b^\prime-b=t \frac{K}{\text{gcd}(K,D+1)}
\end{align}
for any $t \in Z_{> 0}$ (if $t=0$, then $b=b^\prime$, we assumed $b^\prime > b$, hence $t>0$). 
We have $U+1 > \text{gcd}(K,D+1)$ (if $U+1\leq  \text{gcd}(K,D+1)$, we can find the scalar linear index code with $b=1,a=0$). From \eqref{set20}, we have 
\begin{align}
\label{set21}
(b^\prime-b)(U+1)>tK.
\end{align}
From \eqref{set16},\eqref{set17} and \eqref{set35}, we have $K=b(U+1)+c=b^\prime(U+1)+c^\prime$ and $(b^\prime-b)(U+1) < K$. This is a contradiction from \eqref{set21}. Hence, there exists only one $b$ such that $(a_{min},b) \in \mathbf{S}^\prime$.
\end{proof}

\begin{definition}
\label{def6}
Define $b_{min}$ as the corresponding value of $a_{min}$ such that $(a_{min},b_{min}) \in \mathbf{S}^\prime$. 
\end{definition}

\begin{theorem}
\label{lemma4}
Let $(a_{min},b_{min}) \in \mathbf{S}^\prime$. Then
\begin{align*}
\frac{a_{min}}{b_{min}}<\frac{a^\prime}{b^\prime}~~\forall \frac{a^\prime}{b^\prime} \in \mathbf{S}_r.
\end{align*}
\end{theorem}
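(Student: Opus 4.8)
The plan is to show that $\frac{a_{min}}{b_{min}}$ is the minimum value attained on $\mathbf{S}_{r}$, i.e.\ $\frac{a_{min}}{b_{min}}\le\frac{a'}{b'}$ for every $\frac{a'}{b'}\in\mathbf{S}_{r}$ (the symbol ``$<$'' in the statement being read as strict comparison with every value of $\mathbf{S}_{r}$ other than $\frac{a_{min}}{b_{min}}$, which itself belongs to $\mathbf{S}_{r}$ because $(a_{min},b_{min})\in\mathbf{S}^{\prime}\subseteq\mathbf{S}$). By the lemma $\mathbf{S}_{r}=\mathbf{S}_{r}^{\prime}$ it suffices to argue inside $\mathbf{S}^{\prime}$. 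I first record a normalization: if $(a',b')\in\mathbf{S}^{\prime}$ with $b'(D+1)+a'=m'K$ and $g'=\gcd(b',m')$, then $g'\mid a'$ as well (since $a'=m'K-b'(D+1)$), so $(a'/g',\,b'/g')$ is a pair of positive integers with $\gcd(b'/g',m'/g')=1$, and the defining inequality $g'K\ge b'(U+1)$ of $\mathbf{S}^{\prime}$ divides to $K\ge(b'/g')(U+1)$; hence $(a'/g',b'/g')\in\mathbf{S}^{\prime}$, it realizes the same ratio $a'/b'$, and its denominator is at most $K/(U+1)$.

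Now I would run an infinite-descent argument. Assume the ``bad'' set $\mathcal B=\{(a,b)\in\mathbf{S}^{\prime}:a/b<a_{min}/b_{min}\}$ is nonempty and pick $(a_{2},b_{2})\in\mathcal B$ with $a_{2}$ minimal; write $b_{2}(D+1)+a_{2}=m_{2}K$. If $g_{2}:=\gcd(b_{2},m_{2})>1$, the normalization puts $(a_{2}/g_{2},b_{2}/g_{2})$ in $\mathcal B$ with first coordinate $a_{2}/g_{2}<a_{2}$, contradicting minimality; so $g_{2}=1$ and $b_{2}\le K/(U+1)$. Since $a_{min}$ is the least first coordinate occurring in $\mathbf{S}^{\prime}$, we have $a_{2}\ge a_{min}$, and $a_{2}b_{min}<a_{min}b_{2}$ then forces $b_{2}>b_{min}$. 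Set $\tilde a=a_{2}-a_{min}$, $\tilde b=b_{2}-b_{min}$, $\tilde m=m_{2}-m_{min}$, where $b_{min}(D+1)+a_{min}=m_{min}K$; linearity gives $\tilde b(D+1)+\tilde a=\tilde mK$, and $\tilde b\ge1$. If $\tilde a=0$ then $(a_{min},b_{2})\in\mathbf{S}^{\prime}$ with $b_{2}\ne b_{min}$, contradicting the uniqueness of $b_{min}$; hence $\tilde a\ge1$, whence $\tilde mK>0$ gives $\tilde m\ge1$. Moreover $\tilde b\le b_{2}-1\le\frac{K}{U+1}-1$, so $\tilde b(U+1)<K\le\gcd(\tilde b,\tilde m)\,K$, which is exactly the statement $(\tilde a,\tilde b)\in\mathbf{S}^{\prime}$. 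Finally, subtracting $a_{min}b_{min}$ from both sides of $a_{2}b_{min}<a_{min}b_{2}$ yields $(a_{2}-a_{min})b_{min}<a_{min}(b_{2}-b_{min})$, i.e.\ $\frac{\tilde a}{\tilde b}<\frac{a_{min}}{b_{min}}$, so $(\tilde a,\tilde b)\in\mathcal B$ with first coordinate $\tilde a<a_{2}$ --- contradicting minimality again. Hence $\mathcal B=\varnothing$, as required.

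I expect the crux to be the verification $(\tilde a,\tilde b)\in\mathbf{S}^{\prime}$: the difference operation $(a_{2},b_{2})\mapsto(a_{2}-a_{min},\,b_{2}-b_{min})$ stays inside $\mathbf{S}^{\prime}$ only because the normalized counterexample has $b_{2}\le K/(U+1)$, which makes $\tilde b<K/(U+1)$ and renders the $\gcd$ side-condition defining $\mathbf{S}^{\prime}$ automatically true, while the uniqueness of $b_{min}$ is what disposes of the degenerate case $\tilde a=0$. The one bookkeeping subtlety is that this difference operation need not preserve coprimality $\gcd(b,m)=1$, which is why one must first isolate a minimal counterexample and only then observe that minimality forces it into normalized form; the remaining steps --- linearity of $b(D+1)+a=mK$ and a single cross-multiplication --- are routine.
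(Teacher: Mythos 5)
Your argument is correct, and it establishes exactly what the paper's proof establishes: that $\frac{a_{min}}{b_{min}}$ is the minimum of $\mathbf{S}_r$ (the strict inequality in the statement can only be meant for values other than $\frac{a_{min}}{b_{min}}$ itself, as you note, since that ratio lies in $\mathbf{S}_r$). Your route differs from the paper's in its organization. The paper assumes a ratio $\frac{a^\prime}{b^\prime}$ below $\frac{a_{min}}{b_{min}}$ with $\gcd(a^\prime,b^\prime)=1$, uses Lemma \ref{lemmagcd} to get $\gcd(b^\prime,m^\prime)=\gcd(b_{min},m)=1$ and hence $b^\prime,b_{min}\le \frac{K}{U+1}$, subtracts $\left\lfloor \frac{a^\prime}{a_{min}}\right\rfloor$ copies of the relation for $a_{min}$ from that for $a^\prime$, and disposes of the resulting coefficients $x,y$ by a four-case sign analysis, the surviving case producing $(a^\prime \bmod a_{min},y)\in\mathbf{S}^\prime$ and contradicting the minimality of $a_{min}$. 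You instead pick a counterexample of minimal $a$ inside the bad set $\mathcal{B}$, normalize by $\gcd(b,m)$ (your explicit normalization plays the role of Lemma \ref{lemmagcd} and, as a bonus, closes a step the paper glosses over, namely that the reduced pair realizing $\frac{a^\prime}{b^\prime}$ is itself a member of $\mathbf{S}^\prime$ with the required Bezout-type relation), and subtract the single relation for $(a_{min},b_{min})$; the sign analysis disappears because minimality of $a_2$ and the cross-multiplied inequality force $a_2\ge a_{min}$ and $b_2>b_{min}$, and the degenerate case $\tilde a=0$ is handled by the uniqueness-of-$b_{min}$ lemma, which the paper proves but does not invoke in this proof. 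What each buys: the paper's Euclidean-division step descends in one stroke to $a^\prime \bmod a_{min}$ and contradicts the global minimality of $a_{min}$ directly, whereas your single-subtraction descent within $\mathcal{B}$ trades speed of descent for a cleaner argument with no case splitting and with the role of the bound $b\le \frac{K}{U+1}$ made explicit.
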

\begin{proof}
Let $\frac{a^\prime}{b^\prime} \in  \mathbf{S}_r$ such that 
\begin{align}
\label{set13}
\frac{a_{min}}{b_{min}} > \frac{a^\prime}{b^\prime}.
\end{align}

With out loss of generality, we assume $\text{gcd}(a^\prime,b^\prime)=1$. From Definition \ref{def5}, we have $a^\prime>a_{min}$. Let
\begin{align}
\label{gcd5}
a_{min}=m K-b_{min}(D+1)
\end{align}
and 
\begin{align}
\label{gcd6}
a^\prime=m^\prime K-b^\prime(D+1)
\end{align}
for some $m,m^\prime \in Z_{>0}$. 

From Lemma \ref{lemmagcd}, we have
\begin{align}
\label{set27}
\text{gcd}(b_{min},m)=1~\text{and}~\text{gcd}(b^\prime,m^\prime)=1.
\end{align}

From \eqref{gcd5},\eqref{gcd6},\eqref{set27} and Definition \ref{def2}, we have 
\begin{align}
\label{set211}
\text{gcd}(b_{min}K,\underbrace{b_{min}(D+1)+a_{min}}_{m K})=K\geq b_{min}(U+1)
\end{align}
and 
\begin{align}
\label{set22}
\text{gcd}(b^\prime K, \underbrace{b^\prime(D+1)+a^\prime}_{m^\prime K})=K \geq b^\prime(U+1).
\end{align}

By subtracting $\left\lfloor\frac{a^\prime}{a_{min}}\right\rfloor$ times of \eqref{gcd5} from \eqref{gcd6}, we have
\begin{align}
\label{set12}
\nonumber
&a^\prime~\text{mod}~a_{min}=\\&(m^\prime-\left\lfloor \frac{a^\prime}{a_{min}}\right\rfloor m)K-(b^\prime-\left\lfloor \frac{a^\prime}{a_{min}}\right\rfloor b_{min})(D+1).
\end{align}

Let 
\begin{align}
\label{set29}
\nonumber
&x=m^\prime-\left\lfloor \frac{a^\prime}{a_{min}}\right\rfloor m~\text{and}\\&y=b^\prime-\left\lfloor \frac{a^\prime}{a_{min}}\right\rfloor b_{min}.
\end{align}
\medskip
\textit{Case (i):}$x < 0$ and $y <0$. 

In this case, we have $y=b^\prime-\left\lfloor \frac{a^\prime}{a_{min}}\right\rfloor b_{min} <0$, hence $\frac{b^\prime}{b_{min}} <\left\lfloor \frac{a^\prime}{a_{min}}\right\rfloor$. This is a contradiction from \eqref{set13}.

\textit{Case (ii):}$x < 0$ and $y >0$. 

In this case, the LHS of \eqref{set12} is positive and RHS is negative, which is a contradiction.

\textit{Case (iii):}$x > 0$ and $y <0$. 

In this case, the LHS of \eqref{set12} is less than $K$ and RHS is greater than $K$, which is a contradiction.

\textit{Case (iv):} $x > 0$ and $y >0$. 

From \eqref{set211} and \eqref{set22}, we have 
\begin{align}
\label{set28}
b_{min} \leq \frac{K}{U+1}~\text{and}~b^\prime \leq\frac{K}{U+1}.
\end{align}

In this case, from \eqref{set29} and \eqref{set28}, we have $0<y \leq \frac{K}{U+1}$. From \eqref{set12}, we have
\begin{align}
\text{gcd}(y K,\underbrace{y(D+1)+a^\prime~\text{mod}~a_{min}}_{xK})&=\text{gcd}(x,y)K\\& \geq y (U+1).
\end{align}

Hence, $(a^\prime~\text{mod}~a_{min},y) \in \mathbf{S}^\prime$. This is a contradiction from the definition of $a_{min}$ because $a^\prime~\text{mod}~a_{min} < a_{min}$.

This completes the proof.
\end{proof}

\begin{definition}
Consider an SUICP-SNI with $K$ messages, $D$ interfering messages after and $U$ interfering messages before. Define $R_{airm}(K,D,U)$ as
\begin{align*}
R_{airm}(K,D,U)=D+1+\frac{a_{min}}{b_{min}}.
\end{align*}
\end{definition}
\begin{theorem}
Consider an SUICP-SNI with $K$ messages, $D$ interfering messages after and $U$ interfering messages before. Let $G$ be the side-information graph of this index coding problem. Then, 
\begin{align*}
\beta(G) \leq R_{airm}(K,D,U).
\end{align*}
\end{theorem}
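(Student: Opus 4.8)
The plan is to derive the bound directly from the achievability result in Theorem \ref{thm1} together with the optimality property of the pair $(a_{min},b_{min})$ established in Theorem \ref{lemma4}. The key observation is that $R_{airm}(K,D,U)$ is by definition $D+1+\frac{a_{min}}{b_{min}}$, and the whole machinery of Section \ref{sec3} was built precisely so that $(a_{min},b_{min})$ is a legitimate element of $\mathbf{S}^\prime \subseteq \mathbf{S}$ (via the set equality $\mathbf{S}_r = \mathbf{S}_r^\prime$ and the definition of $\mathbf{S}_{K,D,U}$ in Definition \ref{def1}), so Theorem \ref{thm1} applies verbatim to this pair.

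Concretely, I would proceed as follows. First, observe that $(a_{min},b_{min}) \in \mathbf{S}^\prime$ by Definition \ref{def6}, and since every element of $\mathbf{S}^\prime$ satisfies $\text{gcd}(b K, b(D+1)+a) \geq b(U+1)$, we have $(a_{min},b_{min}) \in \mathbf{S} = \mathbf{S}_{K,D,U}$ (the sets in \eqref{ab} and \eqref{ab1} coincide). Second, apply Theorem \ref{thm1} with $(a,b) = (a_{min},b_{min})$: this yields a $b_{min}$-dimensional vector linear index code, using the AIR matrix of size $K b_{min} \times (b_{min}(D+1)+a_{min})$ over any finite field, which achieves the rate $D+1+\frac{a_{min}}{b_{min}}$. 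Third, since an achievable rate is by definition an upper bound on the broadcast rate $\beta(G)$ — the broadcast rate being the infimum over all valid schemes of the normalized code length — we conclude $\beta(G) \leq D+1+\frac{a_{min}}{b_{min}} = R_{airm}(K,D,U)$.

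One point worth making explicit in the write-up is that the algorithm of Section \ref{sec3} (and Theorem \ref{lemma4}) guarantees that $\frac{a_{min}}{b_{min}}$ is the minimum value of $\frac{a}{b}$ over all $(a,b) \in \mathbf{S}$; this is what makes $R_{airm}$ the \emph{best} bound obtainable from this family of AIR-matrix constructions, rather than merely \emph{a} bound. I should also handle the degenerate case $U+1 \leq \text{gcd}(K,D+1)$ separately if needed: there the scalar code with $a=0,b=1$ already works and $R_{airm}(K,D,U) = D+1$, matching the lower bound \eqref{cap4}; in the main case the argument above is the substance.

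Honestly, there is no real obstacle here — the theorem is essentially a corollary of Theorem \ref{thm1} once one has verified membership $(a_{min},b_{min}) \in \mathbf{S}_{K,D,U}$, and that verification is immediate from the definitions. The only mild care needed is bookkeeping: confirming that $\mathbf{S}$ in \eqref{ab1} and $\mathbf{S}_{K,D,U}$ in \eqref{ab} refer to the same set (up to the harmless notational distinction of allowing $a=0$), and that the rate quantity in Theorem \ref{thm1} is indeed an upper bound on $\beta(G)$ in the sense of the definition $\beta(G) \triangleq \inf_t \beta_t(G)/t$ given in Section \ref{sec1}. So the proof is short: cite Theorem \ref{thm1}, note the membership, and invoke the definition of the broadcast rate.
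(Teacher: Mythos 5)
Your proposal is correct and follows essentially the same route as the paper: verify that $(a_{min},b_{min})$ belongs to the set $\mathbf{S}$ of Definition \ref{def1}, invoke the achievability of rate $D+1+\frac{a_{min}}{b_{min}}$ from Theorem \ref{thm1} via the AIR matrix of size $Kb_{min}\times(b_{min}(D+1)+a_{min})$, and conclude $\beta(G)\leq R_{airm}(K,D,U)$ from the definition of the broadcast rate. The extra remarks on minimality of $\frac{a_{min}}{b_{min}}$ and the degenerate case are fine but not needed for this particular statement.
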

\begin{proof}
From the definition of $a_{min}$ and $b_{min}$, the tuple $(a_{min},b_{min}) \in \mathcal{\mathbf{S}}$. From Theorem \ref{thm1}, the rate $D+1+\frac{a_{min}}{b_{min}}=R_{airm}(K,D,U)$ can be achieved by using AIR matrices. Hence, we have $\beta(G) \leq R_{airm}(K,D,U).$
\end{proof}

Algorithm \ref{algo1} computes the values of $a_{min}$ and $b_{min}$ for a given $(K,D,U)$ SUICP-SNI. For the given values of $K,D$ and $U$,  Algorithm \ref{algo1} computes $m^{\prime}$ and $n^{\prime}$ given in \eqref{gcd4} for $l=1$ and $t \in [-l:l]$. If $n^{\prime} \in \{1,2,\ldots,\left\lfloor\frac{K}{U+1}\right\rfloor\}$ for any $t \in [-1:1]=\{-1,0,1\}$, the algorithm terminates and outputs $a_{min}=\text{gcd}(K,D+1)$ and $b_{min}=n^\prime$, else the Algorithm increases the value of $l$ by one and repeats Step2 until it finds an $n^{\prime} \in \{1,2,\ldots,\left\lfloor\frac{K}{U+1}\right\rfloor\}$. To compute the values of $a_{min}$ and $b_{min}$, in Lemma \ref{lemma6}, we prove that Algorithm \ref{algo1} terminates for some $l \leq \frac{K \text{mod} (D+1)}{\text{gcd}(K,D+1)}$.

		\begin{algorithm*}[ht]
		\caption{Algorithm to find $a_{min}$ and $b_{min}$}
			\begin{algorithmic}[1]
				\item [Step 1]~~~
				\begin{itemize}
				\item[\footnotesize{1.1:}] $l=1$.
				\item[\footnotesize{1.2:}] $m$ and $n$ are the coefficients of Bezout's identity such that $mK-n(D+1)=\text{gcd}(K,D+1)$.
				\end{itemize}
				\item [Step 2]~~~
				\begin{itemize}
				\item[\footnotesize{2.1:}] $t=-l$

				\end{itemize}
				\item [Step 3]~~~
				\begin{itemize}
				 \item[\footnotesize{3.1:}] $m^{\prime}=lm + t\frac{D+1}{\text{gcd}(K,D+1)}$ and  $n^{\prime}=ln + t\frac{K}{\text{gcd}(K,D+1)}$.
				\item[\footnotesize{3.2:}] \textbf{If} {$n^{\prime} \in [1:\left \lfloor \frac{K}{U+1} \right \rfloor]$}, then
				$a_{min}=lgcd(K,D+1)$ and $b_{min}=n^{\prime}$ 
				\item[\footnotesize{3.3:}] exit.
			    \item[\footnotesize{3.4:}] \textbf{If} $t < l$, then $t=t+1$ and Repeat Step 3.
			    	\item[\footnotesize{3.5:}] \textbf{else} $l=l+1$. 
			    \item[\footnotesize{3.5:}]  Repeat Step 2.
		     \end{itemize}
		
			\end{algorithmic}
			\label{algo1}
		\end{algorithm*}
 
\begin{lemma}
\label{lemma6}
In Algorithm \ref{algo1},
\begin{align*}
l \leq \frac{K \text{mod} (D+1)}{\text{gcd}(K,D+1)}. 
\end{align*}
\end{lemma}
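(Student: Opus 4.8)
The plan is to exhibit one explicit, small element of $\mathbf{S}^\prime$, namely the pair produced by Euclidean division of $K$ by $D+1$. Write $g=\gcd(K,D+1)$ and $K=q(D+1)+r$ with $q=\lfloor K/(D+1)\rfloor$ and $r=K\bmod(D+1)$, $0\le r<D+1$; since $g\mid K$ and $g\mid(D+1)$ we have $g\mid r$, so $r/g$ is a nonnegative integer. First I would isolate the nondegenerate regime. As already noted in the paragraph preceding Algorithm \ref{algo1}, if $U+1\le g$ then $b=1,a=0$ gives a scalar linear index code and there is nothing to do; so we may assume $U+1>g$. This forces $r\ne 0$ (if $r=0$ then $g=D+1\ge U+1$, a contradiction) and also $D+1<K$, hence $q\ge 1$; in particular $r/g$ is a positive integer and $q<K/g$ (otherwise $q\ge K/g$ gives $D+1\le g$, i.e.\ $g=D+1$, again forcing $r=0$).

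The core step is the verification that $(a,b)=(r,q)\in\mathbf{S}^\prime$. Indeed $q(D+1)+r=K$, so in the notation of \eqref{ab3} we may take $m^\prime=1$; then $\gcd(qK,\,q(D+1)+r)=\gcd(qK,K)=K=\gcd(q,1)\,K=\gcd(q,m^\prime)\,K$, and moreover $q(U+1)\le q(D+1)\le q(D+1)+r=K$, so the inequality $\gcd(q,m^\prime)K\ge q(U+1)$ holds. Hence $(r,q)\in\mathbf{S}^\prime$, and therefore $a_{min}=\min_{(a,b)\in\mathbf{S}^\prime}a\le r$. Since Algorithm \ref{algo1} outputs $a_{min}=l\,\gcd(K,D+1)$ on termination, this yields $l=a_{min}/\gcd(K,D+1)\le r/\gcd(K,D+1)=(K\bmod(D+1))/\gcd(K,D+1)$, which is the claim.

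To make the argument self-contained (so as not to rely on a separate proof that the algorithm actually attains $a_{min}$), I would instead show directly that the inner loop of Step 3 succeeds no later than iteration $l=r/g$ by producing $n^\prime=q$ there. From the Bezout identity $mK-n(D+1)=g$ one gets $n(D+1)\equiv-g\pmod K$, and from $K=q(D+1)+r$ one gets $q(D+1)\equiv-r\pmod K$; multiplying the first congruence by $q$, the second by $n$, and comparing (both sides equal $qn(D+1)$) gives $qg\equiv nr\pmod K$, and dividing by $g$ (legitimate since $g\mid r$) gives $q\equiv n\,(r/g)\pmod{K/g}$, i.e.\ $q\equiv nl\pmod{K/g}$ with $l=r/g$. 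Hence $t:=(q-nl)g/K$ is an integer and the solution \eqref{gcd4} with this $t$ has $n^\prime=nl+t(K/g)=q$. Taking the Bezout coefficient in the reduced range $0\le n<K/g$ and using $1\le q<K/g$, a one-line estimate gives $-l<t\le 0$, so $t$ lies in the scanned window $\{-l,\dots,l\}$; and $q$ lies in the target interval $[1,\lfloor K/(U+1)\rfloor]$ since $q\ge 1$ and $q(U+1)\le q(D+1)\le K$. Therefore the test in Step 3.2 succeeds at $l=r/g$ (or earlier), which is exactly the assertion.

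The only real subtleties are bookkeeping ones: (i) carving out the nondegenerate regime $U+1>g$ so that $r>0$, $q\ge1$, and $r/g$ is a genuine positive integer (without this the statement is vacuous or false); and (ii) in the self-contained route, checking that the particular solution index $t$ realising $n^\prime=q$ falls inside the window $[-l,l]$ that Step 3 actually scans — this is where the standard normalisation $0\le n<K/\gcd(K,D+1)$ of the Bezout coefficient, together with the strict inequality $q<K/\gcd(K,D+1)$, is used. The conceptual content is just the observation that Euclidean division $K=q(D+1)+r$ itself hands us a member of $\mathbf{S}^\prime$ whose associated gcd equals $K$ exactly.
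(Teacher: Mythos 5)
Your core argument is exactly the paper's: Euclidean division $K=\lfloor K/(D+1)\rfloor(D+1)+(K\bmod(D+1))$ hands you the pair $\bigl(K\bmod(D+1),\lfloor K/(D+1)\rfloor\bigr)\in\mathbf{S}^\prime$ (with $m^\prime=1$, so the gcd equals $K\ge \lfloor K/(D+1)\rfloor(U+1)$), whence $a_{min}\le K\bmod(D+1)$ and the bound on $l$ follows from $a_{min}=l\,\gcd(K,D+1)$. The paper stops there, implicitly relying on two things it does not verify: that the algorithm's output is indeed $a_{min}$, and that at stage $l=(K\bmod(D+1))/\gcd(K,D+1)$ the scanned window $t\in[-l:l]$ actually contains a shift producing a valid $n^\prime$. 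Your second, self-contained route closes exactly this gap: the congruence $q\equiv n\,l \pmod{K/\gcd(K,D+1)}$ derived from the Bezout identity and the division identity shows $n^\prime=q=\lfloor K/(D+1)\rfloor$ is hit at $l=(K\bmod(D+1))/\gcd(K,D+1)$ with $-l<t\le 0$ (under the normalization $0\le n<K/\gcd(K,D+1)$ of the Bezout coefficient, which the algorithm statement does not pin down but which is a fair reading), so termination by that stage follows without assuming the output equals $a_{min}$. Your explicit carve-out of the regime $U+1\le\gcd(K,D+1)$ is also a genuine improvement: there $K\bmod(D+1)$ can be zero and the stated bound would be vacuous or false, a case the paper only addresses implicitly elsewhere (in the uniqueness lemma for $b$). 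In short, the proposal is correct, reproduces the paper's one-line idea as its first route, and its second route is strictly more careful about what Algorithm \ref{algo1} actually scans.
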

\begin{proof}
We have
\begin{align}
\label{gcd8}
\frac{K \text{mod}(D+1)}{\text{gcd}(K,D+1)}\text{
gcd}(K,D+1)=K-\left \lfloor \frac{K}{D+1}\right \rfloor(D+1).
\end{align}
From \eqref{gcd8}, we have $\left \lfloor \frac{K}{D+1}\right \rfloor \in [1:\left \lfloor \frac{K}{U+1}\right \rfloor]$ and
\begin{align*}
(K ~\text{mod}~(D+1), \left \lfloor \frac{K}{D+1}\right \rfloor) \in \mathbf{S}^\prime.
\end{align*} 
Hence, $a_{min} \leq K \text{mod}(D+1)$ and $l \leq \frac{K \text{mod}(D+1)}{\text{gcd}(K,D+1)}$. This completes the proof.
\end{proof}

\begin{example}
Consider a SUICP-SNI with $K=17,D=11,U=1$. We have $\text{gcd}(K,D+1)=\text{gcd}(17,12)=1<U+1=2$. From the Extended Euclidean algorithm, the coefficients of Bezout's identity are 5 and -7. We have 
\begin{align*}
&1=5\times 17-7\times 12 \\&
0=12 \times 17-17 \times 12.
\end{align*}
\begin{itemize}
\item Let $l=1$. We have $n^{\prime}=-12,7,24,41,58,\ldots$ and $n^{\prime}=7 \in [1:\left \lfloor \frac{K}{U+1} \right \rfloor]=\{1,2,\ldots, 9\}$.
\end{itemize}

Hence, Algorithm \ref{algo1} gives  $a_{min}=l\text{gcd}(K,D+1)=1$ and $b_{min}=n^{\prime}=7$ as output. For this index coding problem, $R_{airm}=D+1+\frac{a_{min}}{b_{min}}=12.142$. The AIR matrix of size $119 \times 85$ can be used as an  encoding matrix for this SUICP-SNI to achieve a rate of $R_{airm}=12.142$.
\end{example}
\begin{example}
Consider a SUICP-SNI with $K=17,D=5,U=1$. We have $\text{gcd}(K,D+1)=\text{gcd}(17,6)=1<U+1=2$. From the extended Euclidean algorithm, the coefficients of Bezout's identity are -1 and 3. We have  
\begin{align*}
&1=-1\times 17+3\times 6 \\&
0=6 \times 17-17 \times 6.
\end{align*}

\begin{itemize}
\item Let $l=1$. We have $n^{\prime}=-3,14,31,48,\ldots$ and $n^{\prime} \notin [1:\left \lfloor \frac{K}{U+1} \right \rfloor]= \{1,2,\ldots, 9\}$.
\item Let $l=2$. We have $n^{\prime}=-6,11,28,45,\ldots$ and $n^{\prime} \notin \{1,2,\ldots, 9\}$.
\item Let $l=3$. We have $n^{\prime}=-9,8,25,42,\ldots$ and $n^{\prime}=8 \in \{1,2,\ldots, 9\}$.
\end{itemize}
Hence, Algorithm \ref{algo1} gives $a_{min}=l\text{gcd}(K,D+1)=3$ and $b_{min}=n^{\prime}=8$ as output. For this index coding problem, $R_{airm}=D+1+\frac{a_{min}}{b_{min}}=6.375$. The AIR matrix of size $136 \times 51$ can be used as an encoding matrix for this SUICP-SNI to achieve a rate of $R_{airm}=6.375$.
\end{example}

\begin{example}
\label{ex2}
Consider a SUICP-SNI with $K=71,D=25,U=1$. For this SUICP-SNI, we have $a_{min}=1,b_{min}=30$ and  corresponding $R_{airm}(71,25,1)=26.033$. This rate can be achieved by the AIR matrix of size $2130 \times 781$. The encoding matrix for this SUICP-SNI is shown below.

\vspace{10pt}
\centering
\includegraphics[scale=0.56]{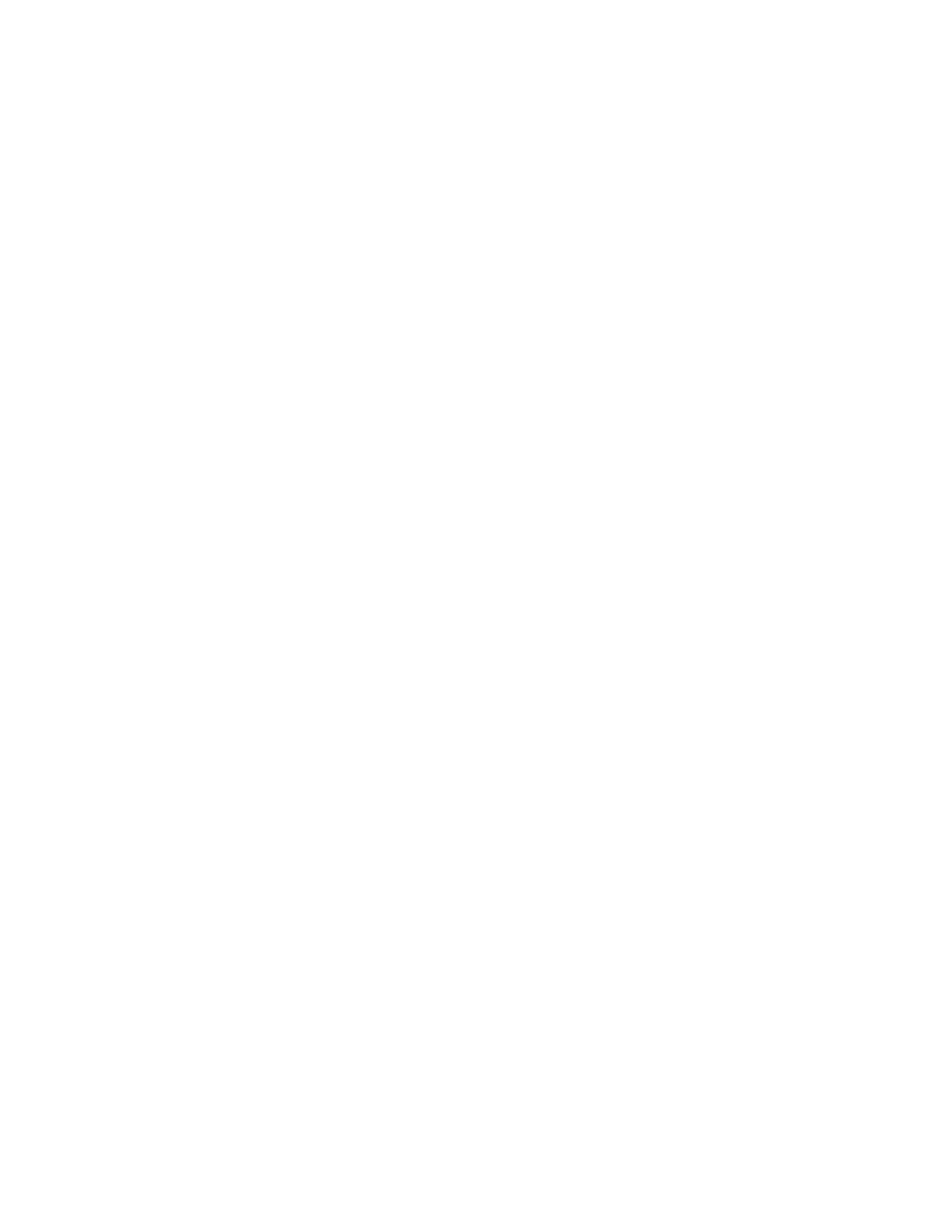}\\
\end{example}
\section{Properties of $R_{airm}(K,D,U)$}
\label{sec4}
In this section, we derive some properties of $R_{airm}(K,D,U)$.
\begin{lemma}
\label{lemma61}
For every  $(K,D,U)$ SUICP-SNI, 
\begin{align}
\label{gcd84}
R_{airm}(K,D,U) \leq \frac{K}{\left \lfloor \frac{K}{D+1} \right \rfloor}.
\end{align}
\end{lemma}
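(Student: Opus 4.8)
The plan is to exhibit one explicit member of $\mathbf{S}^\prime$ whose associated ratio is precisely $\frac{K}{\lfloor K/(D+1)\rfloor}-(D+1)$, and then let the minimality of $\frac{a_{min}}{b_{min}}$ finish the argument. Write $q=\left\lfloor\frac{K}{D+1}\right\rfloor$ and $r=K-q(D+1)=K\bmod(D+1)$, so that $0\le r<D+1$, and $q\ge 1$ since $D+1\le K$. If $r=0$ then $(D+1)\mid K$, hence $\text{gcd}(K,D+1)=D+1\ge U+1$ and the scalar linear index code with $a=0,\,b=1$ is optimal; in that case $R_{airm}(K,D,U)=D+1=\frac{K}{q}$ and \eqref{gcd84} holds with equality. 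So I would then assume $r>0$.

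Next I would check that $(a,b)=(r,q)\in\mathbf{S}^\prime$. Indeed $b(D+1)+a=q(D+1)+r=K$, so $b(D+1)+a=m^\prime K$ with $m^\prime=1$, and therefore $\text{gcd}(bK,b(D+1)+a)=\text{gcd}(qK,K)=K=\text{gcd}(q,1)K$. It remains to verify the defining inequality $K\ge b(U+1)=q(U+1)$; this is exactly where the hypothesis $U\le D$ is used, as it gives $q(U+1)\le q(D+1)\le K$. Thus $(r,q)\in\mathbf{S}^\prime\subseteq\mathbf{S}$, and consequently $\frac{r}{q}\in\mathbf{S}_r$.

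Finally I would invoke Theorem~\ref{lemma4}: since $(a_{min},b_{min})\in\mathbf{S}^\prime$ and $\frac{a_{min}}{b_{min}}$ is no larger than any element of $\mathbf{S}_r$, applying this to $\frac{r}{q}\in\mathbf{S}_r$ gives $\frac{a_{min}}{b_{min}}\le\frac{r}{q}$. Adding $D+1$ to both sides then yields
\begin{align*}
R_{airm}(K,D,U)=D+1+\frac{a_{min}}{b_{min}}\le D+1+\frac{r}{q}=\frac{q(D+1)+r}{q}=\frac{K}{\left\lfloor\frac{K}{D+1}\right\rfloor},
\end{align*}
which is precisely \eqref{gcd84}.

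There is no genuinely hard step: everything reduces to the division identity $K=q(D+1)+r$ together with the already-established optimality of the pair $(a_{min},b_{min})$ from Theorem~\ref{lemma4}. The only points that need a moment's care are the degenerate case $r=0$ handled separately above, and confirming that the witness pair $(r,q)$ really does satisfy the defining inequality of $\mathbf{S}^\prime$, for which $U\le D$ is exactly the right hypothesis.
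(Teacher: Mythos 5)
Your proposal is correct and follows essentially the same route as the paper: exhibit the witness pair $\left(K \bmod (D+1),\ \left\lfloor \frac{K}{D+1}\right\rfloor\right)$, verify it lies in $\mathbf{S}$ because $\left\lfloor \frac{K}{D+1}\right\rfloor(D+1)+\bigl(K \bmod (D+1)\bigr)=K$ and $\left\lfloor \frac{K}{D+1}\right\rfloor(U+1)\le K$, and then conclude by the minimality of $\frac{a_{min}}{b_{min}}$. Your separate treatment of the case $K \bmod (D+1)=0$ is a harmless extra precaution that the paper omits.
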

\begin{proof}
We have 
\begin{align}
\label{gcd83}
\nonumber
\frac{K}{\left\lfloor\frac{K}{D+1}\right\rfloor}&=\frac{\left\lfloor\frac{K}{D+1}\right\rfloor(D+1)+K \text{mod} (D+1)}{\left\lfloor\frac{K}{D+1}\right\rfloor} \\& 
=D+1+\frac{K \text{mod} (D+1)}{\left\lfloor\frac{K}{D+1}\right\rfloor}=D+1+\frac{\alpha}{\gamma},
\end{align}
where $\alpha=K \text{mod} (D+1)$ and $\gamma=\left\lfloor\frac{K}{D+1}\right\rfloor$.

From \eqref{gcd83}, we have
\begin{align}
\label{gcd81}
\alpha=K-\gamma(D+1)
\end{align}
and these values of $\alpha$ and $\gamma$ satisfy the equation $\text{gcd}(K \gamma,\gamma(D+1)+\alpha)\geq \gamma (U+1)$. Hence, $(\alpha,\gamma) \in \mathcal{\mathbf{S}}$. From the definition of $a_{min}$ and $b_{min}$, we have $\frac{a_{min}}{b_{min}} \leq \frac{\alpha}{\gamma}$. This completes the proof.
\end{proof}
\begin{corollary}
\label{cor2}
For, SUICP-SNI, the vector linear index codes constructed by AIR matrices are within $\frac{K \text{mod} (D+1)}{\left\lfloor\frac{K}{D+1}\right\rfloor}$ symbols per message from the lower bound on broadcast rate given in \eqref{cap4}.
\end{corollary}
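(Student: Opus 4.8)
The plan is to read the corollary off Lemma \ref{lemma61} together with the division identity relating $K$ and $D+1$. Recall that, by Theorem \ref{thm1}, the $b$-dimensional vector linear index code built from the AIR matrix of size $Kb_{min}\times(b_{min}(D+1)+a_{min})$ achieves broadcast rate exactly
\begin{align*}
R_{airm}(K,D,U)=D+1+\frac{a_{min}}{b_{min}},
\end{align*}
while \eqref{cap4} gives the lower bound $\beta\geq D+1$. Hence the quantity to estimate is the separation $R_{airm}(K,D,U)-(D+1)=a_{min}/b_{min}$, and it suffices to show $a_{min}/b_{min}\leq \dfrac{K\text{mod}(D+1)}{\left\lfloor K/(D+1)\right\rfloor}$.

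First I would apply Lemma \ref{lemma61}, which states $R_{airm}(K,D,U)\leq \dfrac{K}{\left\lfloor K/(D+1)\right\rfloor}$. Then I would use the division-with-remainder identity $K=\left\lfloor K/(D+1)\right\rfloor(D+1)+(K\text{mod}(D+1))$ --- exactly the rewriting already performed in \eqref{gcd83} --- to obtain
\begin{align*}
\frac{K}{\left\lfloor K/(D+1)\right\rfloor}=D+1+\frac{K\text{mod}(D+1)}{\left\lfloor K/(D+1)\right\rfloor}.
\end{align*}
Subtracting $D+1$ from both sides of the inequality in Lemma \ref{lemma61} then yields $a_{min}/b_{min}\leq \dfrac{K\text{mod}(D+1)}{\left\lfloor K/(D+1)\right\rfloor}$, which is the assertion of the corollary.

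I do not anticipate any genuine obstacle: the whole content is already packaged inside Lemma \ref{lemma61}. That lemma in turn needs only the observation that $(\alpha,\gamma)=(K\text{mod}(D+1),\left\lfloor K/(D+1)\right\rfloor)$ lies in $\mathbf{S}$ --- which holds because $\gamma(D+1)+\alpha=K$, so $\text{gcd}(\gamma K,\gamma(D+1)+\alpha)=K\geq\gamma(U+1)$, the last inequality following from $U\leq D$ --- together with the minimality of $a_{min}/b_{min}$ established in Theorem \ref{lemma4}. The one point worth stating carefully is which quantity plays the role of the ``lower bound'' here: it is the universal bound $D+1$ of \eqref{cap4}, not $\beta(G)$ itself, so the corollary is a concrete closeness-to-lower-bound estimate for the AIR-matrix scheme rather than an optimality claim, and once Lemma \ref{lemma61} is in hand no further argument is required.
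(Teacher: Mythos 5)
Your proposal is correct and follows essentially the same route as the paper: the corollary is an immediate consequence of Lemma \ref{lemma61}, whose proof already contains the rewriting $\frac{K}{\left\lfloor K/(D+1)\right\rfloor}=D+1+\frac{K \text{mod}(D+1)}{\left\lfloor K/(D+1)\right\rfloor}$ in \eqref{gcd83}, so subtracting the lower bound $D+1$ of \eqref{cap4} from the achievable rate $R_{airm}(K,D,U)=D+1+\frac{a_{min}}{b_{min}}$ gives the stated gap. Your added remark that the comparison is against the universal bound $D+1$ rather than $\beta(G)$ itself is accurate and matches the paper's intent.
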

\begin{theorem}
\label{thm2}
The rate $R_{airm}(K,D,U)$ coincide with the results on the exact capacity of SUICP-SNI given in \eqref{cap1},\eqref{cap2} and \eqref{cap3} in the respective settings. 
\end{theorem}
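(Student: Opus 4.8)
The plan is to verify the claimed coincidence separately for each of the three regimes \eqref{cap1}, \eqref{cap2}, and \eqref{cap3}, in each case exhibiting a specific pair $(a,b)\in\mathbf{S}$ whose ratio $\frac{a}{b}$ forces $R_{airm}(K,D,U)$ down to the target value, and then invoking Theorem \ref{lemma4} (minimality of $\frac{a_{min}}{b_{min}}$) together with the converse bound \eqref{cap4} to pin it exactly. Since Theorem \ref{lemma4} already gives $\frac{a_{min}}{b_{min}}\le\frac{a'}{b'}$ for every $\frac{a'}{b'}\in\mathbf{S}_r$, and $\beta\ge D+1$ always, the whole argument reduces to producing, in each regime, a feasible $(a,b)$ with $D+1+\frac{a}{b}$ equal to the known capacity reciprocal.

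First I would handle \eqref{cap3}, the case $U=\gcd(K,D+1)-1$: here the side condition in \eqref{ab} with $a=0$, $b=1$ reads $\gcd(K,D+1)\ge U+1=\gcd(K,D+1)$, which holds with equality, so $(0,1)\in\mathbf{S}$ and hence $R_{airm}\le D+1$; combined with \eqref{cap4} this gives $R_{airm}=D+1=1/C$. The case \eqref{cap1}, $K\to\infty$ with the capacity $\frac{1}{D+1}$, is essentially the same computation in the limit: one shows $a_{min}/b_{min}\to 0$, e.g. by Lemma \ref{lemma61}, since $\frac{K\bmod(D+1)}{\lfloor K/(D+1)\rfloor}\le\frac{D}{\lfloor K/(D+1)\rfloor}\to 0$ as $K\to\infty$, so $R_{airm}(K,D,U)\to D+1$, matching $1/C$. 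For \eqref{cap2}, $U=D=1$: I would use Lemma \ref{lemma61} to get $R_{airm}(K,1,1)\le\frac{K}{\lfloor K/2\rfloor}$, which is exactly the reciprocal of $\frac{\lfloor K/2\rfloor}{K}$; the reverse inequality $R_{airm}\ge\frac{K}{\lfloor K/2\rfloor}$ comes from the converse $\beta\ge 1/C=\frac{K}{\lfloor K/2\rfloor}$ of Blasiak \textit{et al.} \eqref{cap2} together with the fact that $R_{airm}$ is an achievable rate (the preceding theorem), hence $\beta\le R_{airm}$, forcing equality. Here one must be slightly careful when $K$ is odd, since $\frac{K}{\lfloor K/2\rfloor}>2=D+1$ and one should check that the pair $(\alpha,\gamma)=(K\bmod 2,\lfloor K/2\rfloor)=(1,\lfloor K/2\rfloor)$ indeed lies in $\mathbf{S}$, i.e. $\gcd\bigl(K\lfloor K/2\rfloor,\;2\lfloor K/2\rfloor+1\bigr)\ge 2\lfloor K/2\rfloor$; since $2\lfloor K/2\rfloor+1=K$ for odd $K$, this is $\gcd(K\lfloor K/2\rfloor,K)=K\ge 2\lfloor K/2\rfloor=K-1$, which holds.

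The main obstacle I anticipate is the $U=D=1$ case, where matching $R_{airm}$ to $\frac{K}{\lfloor K/2\rfloor}$ requires not merely exhibiting one feasible pair (Lemma \ref{lemma61} gives the $\le$ direction) but also ruling out any strictly smaller ratio in $\mathbf{S}_r$; this is precisely what Theorem \ref{lemma4} is for, but its hypothesis is phrased in terms of $(a_{min},b_{min})\in\mathbf{S}'$, so I would need to either (a) argue that the converse bound $\beta\ge\frac{K}{\lfloor K/2\rfloor}$ from \eqref{cap2} already squeezes $R_{airm}$ from below since $R_{airm}$ is achievable, or (b) directly show no pair $(a,b)$ with $\frac{a}{b}<\frac{K\bmod 2}{\lfloor K/2\rfloor}$ satisfies $\gcd(bK,2b+a)\ge 2b$. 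Route (a) is cleaner and is the one I would take: achievability (preceding theorem) gives $\beta(G)\le R_{airm}$, the converse of \eqref{cap2} gives $\beta(G)=\frac{K}{\lfloor K/2\rfloor}$, and Lemma \ref{lemma61} gives $R_{airm}\le\frac{K}{\lfloor K/2\rfloor}$; chaining these yields $R_{airm}=\frac{K}{\lfloor K/2\rfloor}=1/C$. The same three-way sandwich (achievability upper bound, known converse, Lemma \ref{lemma61} or the explicit feasible pair) closes all three regimes uniformly, so I would organize the proof around that single scheme rather than three ad hoc arguments.
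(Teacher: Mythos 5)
Your proposal is correct and follows essentially the same route as the paper: Lemma \ref{lemma61} handles the regimes of \eqref{cap1} and \eqref{cap2}, and the pair $(a,b)=(0,1)$ handles \eqref{cap3}. The only difference is that you explicitly close the equalities by sandwiching $R_{airm}$ between the achievability bound $\beta \leq R_{airm}$ and the known converses \eqref{cap4} and \eqref{cap2}, a step the paper leaves implicit, so your write-up is if anything slightly more complete than the paper's.
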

\begin{proof}
\begin{enumerate}
\item []
\item To recover the result corresponding to  \eqref{cap1}:
For a given $U \leq D$, if $K \rightarrow \infty$, then $\frac{K \text{mod} (D+1)}{\left\lfloor\frac{K}{D+1}\right\rfloor} \rightarrow 0$. In Lemma \ref{lemma61}, we proved that 
\begin{align*}
D+1+\frac{a_{min}}{b_{min}} \leq D+1+\frac{K \text{mod} (D+1)}{\left\lfloor\frac{K}{D+1}\right\rfloor}.
\end{align*}
Hence, $D+1+\frac{a_{min}}{b_{min}} \rightarrow D+1$. This completes the proof.
\item To recover the capacity corresponding to \eqref{cap2}: 
 Substituting $D=1$ in \eqref{gcd84} completes the proof of this case.
\item  To recover  \eqref{cap3} as a special case:
If we take $a=0$ and $b=1$, then we get a setting considered in \eqref{cap3}. Hence, the scalar linear code ($b=1$) considered in \eqref{cap3} is a special case of vector linear codes considered in this paper.
\end{enumerate}
\end{proof}

For SUICP-SNI with $K=37$, $U \leq D \leq 8$, the values of $a_{min}$, $b_{min}$ and $R_{airm}$ are shown in Table \ref{table1}. For these index coding problems, $D+1$ is given in the $5$th column of Table \ref{table1} gives a lower bound on broadcast rate. The values of $D+1$ can be compared with $R_{airm}$ in Table \ref{table1}. The rate  $R_{airm}$ given in $6$th column is achieved by using AIR matrices of size given in the $7$th column of Table \ref{table1}. For SUICP-SNI with $K=37$, the capacity is known to the special case $U=D=1$. For $U=D=1$, $R_{airm}=2.0555$ coincide with the reciprocal of capacity given in \eqref{cap2}.

{\small
\begin{table}[ht]
\centering
\setlength\extrarowheight{0.0pt}
\begin{tabular}{|c|c|c|c|c|c|c|}
\hline
$D$&$U$&$a$&$b$&$D+1$&$R_{airm}$&AIR  \\
 & &   &  &  &  &  matrix size \\
\hline
 1 & 1&1&18&2&2.055&$666 \times 37$ \\
\hline
 2 & 1,2&1&12&3&3.083&$444 \times 37$ \\
\hline
 3 & 1,2,3&1&9&4&4.111&$333 \times 37$ \\
\hline
 4 & 1,2,3,4&2&7&5&5.285&$259 \times 37$ \\
\hline
 5 &1,2,3,4,5&1&6&6&6.166&$222 \times 37$ \\
\hline
 6 &1,2,$\ldots$,6&2&5&7&7.400&$185 \times 37$ \\
\hline
 7 &1,2,3&2&9&8&8.222&$333 \times 74$ \\
\hline
 7 & 4,5,6,7&5&4&8&9.250&$148 \times 37$ \\
\hline
 8 &1,2,$\ldots$,8&1&4&9&9.250&$148 \times 37$ \\
\hline
\end{tabular}
\vspace{5pt}
\caption{$R_{airm}$ for $K=37$ and $U \leq D \leq 8$.}
\label{table1}
\end{table}
}

\section{Discussion}
\label{sec5}
In this paper, we gave an upperbound on the broadcast rate of SUICP-SNC and proved that this upperbound coincides with the existing three results of capacity of SUICP-SNI. Obtaining the capacity of $(K,D,U)$ SUICP-SNI is a challenging open problem.

In the AIR matrix of size $m \times n$, any set of $n$ adjacent rows are linearly independent, not only over finite fields, but also over real ($\mathbb{R}$) and complex fields ($\mathbb{C}$). Hence, the application of AIR matrices in wireless TIM problems and wireless non-orthogonal multiple access (NOMA) techniques is an interesting area of research.

\section*{Acknowledgment}
This work was supported partly by the Science and Engineering Research Board (SERB) of Department of Science and Technology (DST), Government of India, through J.C. Bose National Fellowship to B. Sundar Rajan

\end{document}